\newcommand\Ex{{\mathbb E}}
\newcommand\F{{\mathcal F}}
\newcommand\Prob{{\mathbb P}}
\newcommand\chitilde{\tilde{\chi}}
\newcommand\Normal{{\mathcal N}}
\newcommand\cB{{\mathcal B}}
\newcommand\E{{\mathcal E}}
\newcommand\cJ{{\mathcal J}}
\newcommand\cP{{\mathcal P}}
\newcommand\R{{\mathbb R}}
\newcommand\C{{\mathbb C}}
\newcommand\dto{\overset{d}{\to }}
\newcommand\wto{\overset{w}{\to }}
\newcommand\one{{\bf 1}}
\newcommand\bra[1]{\langle #1 \rangle}
\DeclareMathOperator{\DP}{DP}
\DeclareMathOperator{\Beta}{Beta}
\DeclareMathOperator{\Dir}{Dir}
\DeclareMathOperator{\tr}{tr}
\newtheorem{theorem}{Theorem}[section]
\newtheorem{lemma}[theorem]{Lemma}
\newtheorem{proposition}[theorem]{Proposition}
\theoremstyle{definition}
\newtheorem{definition}[theorem]{Definition}
\theoremstyle{remark}
\newtheorem{remark}[theorem]{Remark}
\newtheorem{example}[theorem]{Example}
\title{The spectral measures of random Jacobi matrices related to beta ensembles at high temperature and Dirichlet processes}
\author{
Fumihiko Nakano\footnote{Mathematical Institute, Tohoku University, Sendai,  Japan.
\newline Email: fumihiko.nakano.e4@tohoku.ac.jp}
\and
Hoang Dung Trinh\footnote{Faculty of Mathematics Mechanics Informatics, University of Science, Vietnam National University, Hanoi, Vietnam.
\newline Email: thdung.hus@gmail.com} 
\and
Khanh Duy Trinh\footnote{Global Center for Science and Engineering, Waseda University, Japan.
\newline
Email: trinh@aoni.waseda.jp 
} 
}
\begin{document}

\maketitle
\begin{abstract}
In a high temperature regime where $\beta N \to 2c$, the empirical distribution of the eigenvalues of Gaussian beta ensembles, beta Laguerre ensembles and beta Jacobi ensembles converges to a limiting measure which is related to associated Hermite polynomials, associated Laguerre polynomials and associated Jacobi polynomials, respectively. Here $\beta$ is the inverse temperature parameter, $N$ is the system size and $c>0$ is a given constant. This paper studies the spectral measure of the random tridiagonal matrix model of the three classical beta ensembles. We show that  in the high temperature regime, the spectral measure converges in distribution to a Dirichlet process with base distribution being the limiting distribution, and scaling parameter $c$. Consequently, the spectral measure of a related semi-infinite Jacobi matrix coincides with that Dirichlet process, which provides examples of random Jacobi matrices with explicit spectral measures. 

\medskip
\noindent{\bf Keywords:} semi-infinite Jacobi matrix ; Dirichlet process ; beta ensembles ; high temperature 
		
\medskip
	
\noindent{\bf AMS Subject Classification: } Primary 60B20 ; Secondary 60H05
%
%
%
%
\end{abstract}

\section{Introduction}

Beta ensembles in the so-called high temperature regime, the regime where $N \to \infty$ with $\beta N \to 2c \in (0, \infty)$, have gained much attention recently. Here $\beta > 0$ is the inverse temperature parameter, $N$ denotes the system size and $c \in (0, \infty)$ is a given constant. In this regime, the empirical distribution of the eigenvalues converges weakly to a limiting probability measure $\rho_c$ depending on the parameter $c$ and the weight or the potential of the model \cite{Lambert-2021, Nakano-Trinh-2020}. In the three classical cases (Gaussian, Laguerre and Jacobi), the measure $\rho_c$ is related to associated Hermite polynomials, associated Laguerre polynomials and associated Jacobi polynomials, respectively \cite{Allez12, Trinh-Trinh-Jacobi, Trinh-Trinh-2021}.

The paper studies the spectral measure of the random Jacobi matrix (symmetric tridiagonal matrices) model of the three classical beta ensembles. We show that in the high temperature regime, the spectral measure converges in distribution to a Dirichlet process with base distribution $\rho_c$ and scaling parameter $c$. Here a Dirichlet process with base distribution $\rho$ and scaling parameter $c>0$, denoted by $\DP(\rho, c)$, is a random probability measure $P$ with the property that for any measurable partition $\R = A_1 \cup A_2 \cup \cdots \cup A_n$, the vector 
\[
	(P(A_1), \dots, P(A_n)) \sim \Dir(c\rho(A_1), \dots, c\rho(A_n)),
\]
has the Dirichlet distribution with parameters $(c\rho(A_1), \dots, c\rho(A_n))$.

Let us get into more detail by introducing the result in the Gaussian case. The random tridiagonal matrix model introduced in \cite{DE02} has the following form 
\begin{align*}
	\tilde H_{N, \beta} &=\frac{\sqrt 2}{\sqrt{\beta N}} \begin{pmatrix}
		a_1^{(N)}	&b_1^{(N)}	\\
		b_1^{(N)}	&a_2^{(N)}		&b_2^{(N)}\\
		&	\ddots	&\ddots	&\ddots\\
			&&b_{N-1}^{(N)}	&a_N^{(N)}
	\end{pmatrix}\\
	&
	\sim
	\frac{\sqrt 2}{\sqrt{\beta N}}\begin{pmatrix}
		\Normal(0,1)	&\chitilde_{(N-1)\beta}	\\
		\chitilde_{(N-1)\beta}	&\Normal(0,1)		&\chitilde_{(N-2)\beta}\\
		&	\ddots	&\ddots	&\ddots\\
			&&\chitilde_{\beta}	&\Normal(0,1)
	\end{pmatrix}.
\end{align*}
To be more precise, the random variables $\{a_1^{(N)}, \dots, a_N^{(N)}, b_1^{(N)}, \dots, b_{N-1}^{(N)}\}$ are independent with $a_i^{(N)} \sim \Normal(0,1)$ having the standard normal distribution, and $b_j^{(N)}\sim \chitilde_{(N-j)\beta}$ having the $1/\sqrt 2$ chi distribution with $(N-j)\beta$ degrees of freedom.  The eigenvalues $\lambda_1, \dots, \lambda_N$ of $\tilde H_{N, \beta}$ follow the Gaussian beta ensembles, that is, their joint density is proportional to 
\[
	\prod_{i<j} |\lambda_j - \lambda_i|^\beta \prod_{l=1}^N e^{-\frac {\beta N}{4} \lambda_l^2}.
\]

Spectral properties of Gaussian beta ensembles have been studied by reading off the random tridiagonal matrix model \cite{DE06, Ramirez-Rider-Virag-2011, Valko-Virag-2009}.
For fixed $\beta > 0$, the empirical distribution 
\[
	L_N = \frac{1}{N} \sum_{i=1}^N \delta_{\lambda_i}
\]
converges weakly to the standard semi-circle distribution, almost surely \cite{DE06}, which is called Wigner's semi-circle law. Here $\delta_\lambda$ is the Dirac measure. Even in case the parameter $\beta$ depends on $N$, Wigner's semi-cirlce law holds as long as $\beta N \to \infty$ \cite{Trinh-2019}.

In a high temperature regime where $\beta N \to 2c$, with given constant $c  \in (0, \infty)$, the empirical distribution $L_N$ converges to a limiting probability measure which is related to associated Hermite polynomials \cite{Allez12}. In this regime, for simplicity, let $\beta = 2c/N$ and consider the scaled version 
\begin{align*}
	H_{N} = \begin{pmatrix}
		a_1^{(N)}	&b_1^{(N)}	\\
		b_1^{(N)}	&a_2^{(N)}		&b_2^{(N)}\\
		&	\ddots	&\ddots	&\ddots\\
			&&b_{N-1}^{(N)}	&a_N^{(N)}
	\end{pmatrix}
	\sim
\begin{pmatrix}
		\Normal(0,1)	&\chitilde_{(N-1)\beta}	\\
		\chitilde_{(N-1)\beta}	&\Normal(0,1)		&\chitilde_{(N-2)\beta}\\
		&	\ddots	&\ddots	&\ddots\\
			&&\chitilde_{\beta}	&\Normal(0,1)
	\end{pmatrix}
\end{align*}
whose eigenvalues have the joint density proportional to 
\begin{equation}\label{GbE-hightemp}
	\prod_{i<j} |\lambda_j - \lambda_i|^{\frac{2c}N} \prod_{l=1}^N e^{-\frac {1}{2} \lambda_l^2}.
\end{equation}
Then the empirical distribution of those eigenvalues converges weakly to a probability measure $\rho_c$ with density given by 
\begin{equation}\label{rho-c-Gauss}
		\rho_c(x) = \frac{e^{-x^2/2}}{\sqrt{2\pi}}\frac{1}{|\hat f_c(x)|^2},\quad
		\text{where }
		\hat f_c(x) =  \sqrt{\frac{c}{\Gamma(c)}} \int_0^\infty t^{c - 1} 
		e^{-\frac {t^2}{2} + \sqrt{-1}xt} dt.		
\end{equation}
There are several approaches (see \cite{Allez12, DS15, Forrester-2021, NTT-2025}, for example) to identify the measure $\rho_c$ which is the probability measure of associated Hermite polynomials \cite{Askey-Wimp-1984}.

Next, let us define the spectral measure of a symmetric tridiagonal matrix, called a Jacobi matrix. To a semi-infinite Jacobi matrix $J$,
\[
	J = \begin{pmatrix}
		a_1	&b_1	\\
		b_1	&a_2		&b_2\\
		&	\ddots	&\ddots	&\ddots
	\end{pmatrix}, \quad a_i \in \R, b_j > 0,
\]
there is a probability measure $\mu$ on the real line  satisfying the following moment condition 
\begin{equation}\label{spectral-measure-moment}
	\int_\R x^n d\mu(x) =  J^n(1, 1), \quad n = 0, 1, 2, \dots.
\end{equation}
That measure orthogonalizes the sequence of polynomials 
\begin{align*}
	&P_0(x) = 1; P_1(x) = x - a_1;\\
	&P_{n+1}(x) = x P_n(x) - a_{n+1} P_{n}(x) - b_n^2 P_{n-1}(x), \quad (n \ge 1).
\end{align*}
When the measure $\mu$ is unique, or equivalently is determined by moments, we call it the spectral measure of $J$. A useful sufficient condition for the unicity is given by 
\begin{equation}\label{sp-sufficient}
\sum_{n=1}^\infty \frac 1 {b_n} = \infty,
\end{equation}
\cite[Corollary 3.8.9]{Simon-book-2011}.

We give two examples here. First, the standard semi-circle distribution is the spectral measure of 
\[
	J_{free} = \begin{pmatrix}
		0	&1	\\
		1	&0		&1\\
		&	\ddots	&\ddots	&\ddots
	\end{pmatrix}, \quad (a_i = 0, b_j=1),
\] 
and is related to the Chebyshev polynomials of the second kind. Second, the limiting probability measure $\rho_c$ in~\eqref{rho-c-Gauss} is the spectral measure of 
\[
	\begin{pmatrix}
		0	&\sqrt{c+1}	\\
		\sqrt{c+1}	&0		&\sqrt{c+2}\\
		&	\ddots	&\ddots	&\ddots
	\end{pmatrix}, \quad (a_i = 0, b_j=\sqrt{c+j}),
\] 
which orthogonalizes associated Hermite polynomials $\{P_n^{(c)}\}_{n \ge 0}$,
\begin{align*}
	&P_0^{(c)}(x) = 1, \quad P_1^{(c)}(x) = x,\\
	&P^{(c)}_{n+1}(x) = x P^{(c)}_n(x)  - (n+c) P^{(c)}_{n-1}(x), \quad (n \ge 1).
\end{align*}

In case of finite Jacobi matrices, the spectral measure can be expressed in terms of their eigenvalues and eigenvectors.
Take the random Jacobi matrix $\tilde H_{N, \beta}$, for example. Its spectral measure, denoted by $\widetilde {sp}_N$, is supported on the eigenvalues $\lambda_1, \dots, \lambda_N$ with weights $w_i = v_i(1)^2$, where $v_1, \dots, v_N$ are the corresponding normalized eigenvectors,
\[
	\widetilde {sp}_N = \sum_{i=1}^N w_i \delta_{\lambda_i}.
\]
Moreover, the weights $\{w_i\}_{i=1}^N$ are independent of the eigenvalues and have the symmetric Dirichlet distribution with parameter $\beta / 2$. As $N \to \infty$ with
$\beta N \to \infty$, it is clear that 
\[
	\frac{\sqrt 2}{\sqrt {N \beta}}a_i^{(N)} \sim \frac{\sqrt 2}{\sqrt {N \beta}} \Normal(0,1) \to 0, \quad\frac{\sqrt 2}{\sqrt {N \beta}} b_i^{(N)} \sim \frac{\sqrt 2}{\sqrt {N \beta}} \chitilde_{(N-i)\beta} \to 1.
\]
Here the convergences hold in probability for any fixed $i$. Roughly speaking, the random matrix $\tilde H_{N, \beta}$ converges in probability to $J_{free}$, from which Wigner's semi-circle law for the spectral measure of $\tilde H_{N, \beta}$ holds \cite{Trinh-ojm2018}. Note that as $N \to \infty$ with
$\beta N \to \infty$, both the empirical distribution and the spectral measure converge to the same limit.

In the high temperature regime $\beta =  2c/N$, the random matrix $H_N$ converges in distribution to the following semi-infinite Jacobi matrix 
\begin{equation}\label{Hc}
	H_c = 	\begin{pmatrix}
		a_1^{(\infty)}	&b_1^{(\infty)}	\\
		b_1^{(\infty)}	&a_2^{(\infty)}		&b_2^{(\infty)}\\
		&	\ddots	&\ddots	&\ddots
	\end{pmatrix}\sim\begin{pmatrix}
		\Normal(0,1)	&\chitilde_{2c}	\\
		\chitilde_{2c}	&\Normal(0,1)		&\chitilde_{2c}\\
		&	\ddots	&\ddots	&\ddots
	\end{pmatrix}.
\end{equation}
Here $\{a_n^{(\infty)}\}_{n=1}^\infty$ and $\{b_n^{(\infty)}\}_{n=1}^\infty$ are two independent sequences of i.i.d.\ (independent identically distributed) random variables. By independence, for any $k$, 
\[
	\{a_1^{(N)}, \dots, a_k^{(N)}, b_1^{(N)}, \dots, b_k^{(N)}\} \dto \{a_1^{(\infty)}, \dots, a_k^{(\infty)}, b_1^{(\infty)}, \dots, b_k^{(\infty)}\}
\]
as the joint convergence in distribution of random variables.
 Let $\nu_c$ be the spectral measure of $H_c$, which is unique almost surely by taking into account the sufficient condition~\eqref{sp-sufficient}. From the joint convergence in distribution of the entries, we deduce that the spectral measure $sp_N$ of $H_N$ converges in distribution to the random probability measure $\nu_c$ (see Lemma~\ref{lem:nuc}).

The aim of this paper is to identify the limit of the spectral measure $sp_N$ of $H_N$  by using the explicit expression 
\begin{equation}\label{spN}
	sp_N = \sum_{i=1}^N w_i \delta_{\lambda_i},
\end{equation}
where $\lambda_1, \dots, \lambda_N$ are distributed as the Gaussian beta ensemble~\eqref{GbE-hightemp}, independent of the vector of weights $(w_1, \dots, w_N)$ which has the symmetric Dirichlet distribution with parameter $c/N$. We show that the sequence $\{sp_N\}$ converges in distribution to a Dirichlet process with base distribution $\rho_c$ and scaling parameter $c$.  A more general result is stated as follows.

\begin{theorem}\label{thm:intro-general}
Let $c > 0$ be given.
	For each $N$, let $(w_1, \dots, w_N)$ have the symmetric Dirichlet distribution with parameter $c/N$ and be independent of  random variables $\lambda_1, \dots, \lambda_N$. Assume that the empirical distribution $L_N = N^{-1} \sum_{i=1}^N \delta_{\lambda_i}$ converges weakly to a limiting probability measure $\rho$ as $N \to \infty$, in probability. Then the random probability measure (the spectral measure) 
\[
	sp_N := \sum_{i=1}^N w_i \delta_{\lambda_i}
\]
converges in distribution to a Dirichlet process with base distribution $\rho$ and scaling parameter $c$.
\end{theorem}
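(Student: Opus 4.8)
The plan is to carry out the convergence inside the Polish space $M_1(\R)$ of probability measures on $\R$ with the topology of weak convergence, in two stages: first establish tightness of the laws of $\{sp_N\}$, then pin down all finite-dimensional distributions of the limit. For tightness I would observe that the intensity (mean) measure of $sp_N$ coincides with that of $L_N$: since the weights are independent of the points and $\Ex[w_i] = 1/N$ for the symmetric Dirichlet distribution with parameter $c/N$, one has $\Ex[sp_N(A)] = \sum_i \Ex[w_i]\,\Ex[\one_A(\lambda_i)] = \Ex[L_N(A)]$ for every Borel set $A$. Because $L_N \to \rho$ weakly in probability, the deterministic measures $\Ex[L_N]$ converge weakly to $\rho$ and hence form a tight family; the standard criterion that a family of random probability measures is tight whenever its intensity measures are tight then gives tightness of $\{sp_N\}$.

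For the finite-dimensional distributions I would exploit the aggregation structure of the Dirichlet weights. Fix a finite Borel partition $\R = A_1 \cup \cdots \cup A_m$ with $\rho(\partial A_l) = 0$ for every $l$. Conditionally on $(\lambda_1, \dots, \lambda_N)$ the weight vector still has law $\Dir(c/N, \dots, c/N)$ by independence, so the aggregation property of the Dirichlet distribution yields
\[
	(sp_N(A_1), \dots, sp_N(A_m)) \mid \lambda \ \sim\ \Dir\!\big(c L_N(A_1), \dots, c L_N(A_m)\big),
\]
because the parameter attached to the block $\{i : \lambda_i \in A_l\}$ is $\#\{i:\lambda_i\in A_l\}\cdot (c/N) = c\,L_N(A_l)$. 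By the portmanteau theorem (via the subsequence characterization of convergence in probability) $L_N(A_l) \Pto \rho(A_l)$ for each continuity set, so the random parameter vector converges in probability to the constant $(c\rho(A_1), \dots, c\rho(A_m))$. Writing $\Dir(\theta) = (G_1, \dots, G_m)/\sum_l G_l$ with independent $G_l \sim \Gam(\theta_l)$ and using the weak continuity of $\theta \mapsto \Gam(\theta)$ (including the boundary case $\Gam(\theta)\to\delta_0$ as $\theta\to 0^+$, harmless since $\sum_l c\rho(A_l) = c>0$), a bounded-convergence argument on the conditional characteristic functionals upgrades the parameter convergence to
\[
	(sp_N(A_1), \dots, sp_N(A_m)) \dto \Dir\!\big(c\rho(A_1), \dots, c\rho(A_m)\big).
\]

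Finally I would assemble the pieces. Given tightness, any subsequential limit $Q$ of the laws of $sp_N$ lives on $M_1(\R)$; the evaluation map $\mu \mapsto (\mu(A_1), \dots, \mu(A_m))$ is continuous at every $\mu$ charging no boundary $\partial A_l$, and under the candidate limit these boundaries are a.s.\ null because $\Ex[P(\partial A_l)] = \rho(\partial A_l) = 0$. Hence the continuous mapping theorem identifies the partition marginals of every subsequential limit with the Dirichlet laws above, which is exactly the defining property of $\DP(\rho, c)$ recorded in the introduction. Since the finite-dimensional distributions over a generating algebra of sets determine the law of a random probability measure, each subsequential limit equals the law of $\DP(\rho, c)$, and therefore $sp_N \dto \DP(\rho, c)$.

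The main obstacle is this last upgrade from finite-dimensional set-evaluation convergence to genuine weak convergence of the random measures: the set-evaluation functionals are not continuous on $M_1(\R)$, so the argument must route through tightness together with the continuous mapping theorem on continuity-set partitions, and one must check that matching all partition marginals pins down the full law of the limit. A secondary delicate point is the boundary behaviour of the Dirichlet distribution when some cell has $\rho(A_l)=0$, best handled through the Gamma representation as above. An alternative route to the finite-dimensional step, bypassing the partition/boundary bookkeeping, is the method of moments: using the mixed-moment formula $\Ex[\prod_i w_i^{m_i}] = \prod_i (c/N)^{(\sum_i m_i)\text{-type factor}}$, more precisely $\Ex[\prod_i w_i^{m_i}] = c^{(\sum_i m_i)\,-1}\prod_i (c/N)^{(m_i)}$ with rising factorial $x^{(k)} = x(x+1)\cdots(x+k-1)$, together with $(c/N)^{(k)} = (c/N)(k-1)!\,(1+O(1/N))$, one checks that for bounded continuous $g_1, \dots, g_n$ the joint moments $\Ex[\prod_{j} \lr{g_j, sp_N}]$ converge, after grouping index tuples by the set partition they induce, to $\sum_{\pi} c^{|\pi|}/c^{(n)} \prod_{B\in\pi} (|B|-1)!\, \lr{\prod_{j\in B} g_j, \rho}$, which are the mixed moments of $\DP(\rho,c)$; boundedness of the functionals makes the limit moment-determinate and yields the finite-dimensional convergence directly.
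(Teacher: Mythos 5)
Your argument is correct in substance, but it follows a genuinely different route from the paper. The paper never works at the level of the measure-valued process directly: it fixes a single bounded continuous $f$, notes that $\{\bra{sp_N,f}\}$ is automatically tight, and identifies every subsequential limit through the Markov--Krein relation. The key input is the finite-$N$ identity $\Ex[(z-\bra{sp_N,f})^{-c}]=\Ex[\exp(-c\bra{L_N,\log(z-f(\cdot))})]$ (a consequence of the Dirichlet-mean formula of Example~\ref{ex:Dirichlet-distribution}), whose right-hand side converges by the hypothesis on $L_N$ and bounded convergence; uniqueness of the Markov--Krein transform \cite{LR2004} then pins down the limit law of $\bra{sp_N,f}$ as that of $\bra{P,f}$ with $P\sim\DP(\rho,c)$. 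You instead prove tightness of the laws of $sp_N$ on $\cP(\R)$ via the intensity measures, use the aggregation property of the symmetric Dirichlet distribution conditionally on $\lambda$ to get $(sp_N(A_1),\dots,sp_N(A_m))\mid\lambda\sim\Dir(cL_N(A_1),\dots,cL_N(A_m))$, and pass to the limit in the parameters. Your route is more self-contained (it needs only the aggregation property and standard random-measure theory, not the Cifarelli--Regazzini/Lijoi--Regazzini uniqueness theorem) and it lands directly on the defining partition property of $\DP(\rho,c)$; the paper's route trades the continuity-set and finite-dimensional-uniqueness bookkeeping for one analytic identity and its injectivity. Your method-of-moments variant is likewise sound and is close in spirit to how such limits are treated elsewhere in the literature.

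One expositional slip worth fixing: in the assembly step you justify $\mu(\partial A_l)=0$ a.s.\ \emph{under the candidate limit} $\DP(\rho,c)$, but what the continuous mapping theorem requires is that this hold under each \emph{subsequential limit} $Q$. This is easily repaired: since $\bra{sp_N,f}$ is bounded by $\norm{f}_\infty$, the mean measure of any subsequential limit is the weak limit of $\overline{sp}_N=\bar L_N$, namely $\rho$, so $Q$-a.s.\ $\mu(\partial A_l)=0$ whenever $\rho(\partial A_l)=0$. With that observation inserted, the argument is complete.
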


Consequently, the spectral measure $\nu_c$ of the semi-infinite Jacobi matrix $H_c$ is a Dirichlet process $\DP(\rho_c, c)$. Moreover, from an explicit construction of the Dirichlet process in \cite{Ferguson-1973, Sethuraman-1994}, $\nu_c$ has the same distribution with a random discrete probability measure 
\[
	P = \sum_{i=1}^\infty Q_i \delta_{Y_i},
\]
where $Y = (Y_1, Y_2, \dots)$ is a sequence of i.i.d.\ random variables with the common distribution $\rho_c$, which is independent of the random weights $Q = (Q_1, Q_2, \dots)$. Here $Q$ has the Poisson-Dirichlet distribution with parameter $c$  in the infinite-dimensional simplex 
\[
	\Sigma = \left\{x = (x_1, x_2, \dots) : x_1 \ge x_2 \ge \cdots \ge 0, \sum x_i = 1\right\}.
\]

We summarize the result on the spectral measure of the semi-infinite Jacobi matrix $H_c$ in the following theorem.
\begin{theorem}\label{thm:main-intro} 
	The spectral measure $\nu_c$ of the random Jacobi matrix $H_c$~\eqref{Hc} is the Dirichlet process $\DP(\rho_c, c)$. Consequently, the spectral measure $\nu_c$ is discrete, almost surely, and is supported on an i.i.d.\ sequence from the common distribution $\rho_c$.
\end{theorem}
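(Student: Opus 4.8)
The plan is to establish the identification $\nu_c \stackrel{d}{=} \DP(\rho_c, c)$ by showing that the finite-$N$ spectral measure $sp_N$ has two a priori distinct distributional limits which must therefore coincide, and then to read off discreteness and the support structure from the standard representation of a Dirichlet process. The key observation is that the serious analytic work has already been isolated into Theorem~\ref{thm:intro-general} and Lemma~\ref{lem:nuc}, so the argument here is one of assembly and uniqueness.

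First I would apply Theorem~\ref{thm:intro-general} to the Gaussian model $H_N$ with $\beta = 2c/N$. This requires checking its three hypotheses in the present setting: that the weight vector $(w_1, \dots, w_N)$ is symmetric Dirichlet with parameter $c/N$, that it is independent of the eigenvalues $\lambda_1, \dots, \lambda_N$, and that $L_N = N^{-1}\sum_i \delta_{\lambda_i}$ converges weakly in probability to a limit. The first two are exactly the distributional facts recorded for $sp_N$ in~\eqref{spN}, while the third is the known high-temperature limit law $L_N \to \rho_c$ (in probability), with $\rho_c$ as in~\eqref{rho-c-Gauss}. Theorem~\ref{thm:intro-general} then delivers $sp_N \dto \DP(\rho_c, c)$ in the space of probability measures on $\R$ under the weak topology.

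Second, I would invoke Lemma~\ref{lem:nuc}, which gives $sp_N \dto \nu_c$ as a consequence of the joint convergence in distribution of the finitely many entries $\{a_i^{(N)}, b_i^{(N)}\}$ to those of $H_c$, together with the almost sure uniqueness of the spectral measure of $H_c$ guaranteed by~\eqref{sp-sufficient}. Since a sequence of random elements of a Polish space admits at most one limit in distribution, the two limits must agree, yielding $\nu_c \stackrel{d}{=} \DP(\rho_c, c)$ and hence the first assertion. The \emph{consequently} part is then immediate from the Ferguson--Sethuraman construction recalled above: a realization of $\DP(\rho_c, c)$ has the form $\sum_{i \ge 1} Q_i \delta_{Y_i}$ with $(Y_i)$ i.i.d.\ from $\rho_c$ and $Q$ Poisson--Dirichlet with parameter $c$, so it is almost surely discrete and supported on an i.i.d.\ sequence drawn from $\rho_c$; transporting this through the equality in distribution gives the claim for $\nu_c$.

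I expect the main subtlety to lie not in this short identification but in making the two convergences genuinely comparable: both must be phrased as convergence in distribution of random probability measures on the \emph{non-compact} line $\R$ in one and the same weak topology, and in particular one must ensure tightness so that no mass escapes to infinity in either limit. This is precisely what the weak convergence $L_N \to \rho_c$ controls on the $\DP$-side, whereas the uniqueness of the spectral measure of $H_c$ underpins the $\nu_c$-side; verifying that these two limiting procedures take place in the same space, so that uniqueness of the weak limit may legitimately be applied, is the one point deserving care.
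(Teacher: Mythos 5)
Your proposal is correct and follows essentially the same route as the paper: apply Theorem~\ref{thm:intro-general} to obtain $sp_N \dto \DP(\rho_c,c)$, combine this with Lemma~\ref{lem:nuc} giving $sp_N \dto \nu_c$, identify the two limits by uniqueness of limits in distribution, and read off discreteness and the i.i.d.\ support from the Ferguson--Sethuraman representation. The paper assembles the proof in exactly this way in Section~\ref{sect:classical}, with the only substantive remaining work being the proof of Lemma~\ref{lem:nuc} itself, which you correctly treat as a separately established ingredient.
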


The above theorem provides a specific example of a family of random Jacobi matrices whose spectral measures are explicitly given. 
Analogous results in the Laguerre case and the Jacobi case will be stated in Section~\ref{sect:classical}.

Let us sketch ideas in the proof of Theorem~\ref{thm:intro-general}. To begin with, note that the convergence in distribution of random probability measures in Theorem~\ref{thm:intro-general} means that for any bounded continuous function $f \colon \R \to \R$, 
\begin{equation}\label{continuous-test}
	\bra{sp_N, f} = \sum_{i=1}^N w_i f(\lambda_i) \dto \bra{P, f},
\end{equation}
where $P\sim \DP(\rho, c)$ (see Section~\ref{sect:rpm}). Here $\bra{\mu, f}$ denotes the integral $\int f d\mu$ of an integrable function $f$ with respect to the measure $\mu$. Next, the distribution of $\bra{P, f}$ is the Markov--Krein transform of the distribution of $f$ under $\rho$, or equivalently,
\begin{equation}
	\Ex[(z - \bra{P, f})^{-c}] = \exp\left(	- c \int_\R \log(z - f(u)) d\rho(u)\right), \quad z \in \C \setminus \R.
\end{equation}
Then to prove the convergence \eqref{continuous-test}, the idea is to show that any  limit point of the sequence $\{\bra{sp_N, f}\}$ satisfies the above Markov--Krein relation. Details will be given in Section~\ref{sect:proof} after introducing Dirichlet processes and the Markov--Krein relation in Section~\ref{sect:MKT}.

\section{Diriclet processes and the Markov--Krein transform}
\label{sect:MKT}
Let $\cP(\R)$ be the space of all probability measures on $(\R, \cB(\R))$, where $\cB(\R)$ denotes the Borel $\sigma$-field of $\R$. A sequence of probability measures $\{\mu_N \}_{N = 1}^\infty$ is said to converge weakly to $\mu \in \cP(\R)$, denoted by $\mu_N \wto \mu$, if for all bounded continuous functions $f \colon \R \to \R$, 
\[
	\bra{\mu_N, f} \to \bra{\mu, f}\quad \text{as} \quad N\to \infty.
\]
The topology of weak convergence on $\cP(\R)$ can be metrizable by the L\'evy--Prokhorov metric  making it  a complete separable metric space. A random probability measure $\xi$ is defined to be a measurable map from some probability space $(\Omega, \F, \Prob)$ to $(\cP(\R), \cB(\cP(\R))$. Here $\cB(\cP(\R))$ is the Borel $\sigma$-field on $\cP(\R)$. 

Let $\xi$ be a random probability measure. Then for any Borel set $B \in \cB(\R)$, $\xi(B)$ is a real random variable. For any bounded measurable function $f$, the integral $\bra{\xi, f}$ is also a random variable.  As random elements on a separable metric space, concepts of almost sure convergence, convergence in probability and convergence in distribution of random probability measures are defined as usual. Equivalent conditions will be given in Definition~\ref{def:rm}.

Let $\alpha$ be a probability measure on $\R$ and $c > 0$ be a positive number.
A Dirichlet process $P$ with base distribution $\alpha$ and scaling parameter $c$, denoted by $P \sim \DP(\alpha, c)$, is a random probability measure such that for any finite measurable partition $\R = A_1 \cup A_2 \cup \cdots \cup A_n$, the vector $(P(A_1), \dots, P(A_n))$ has the Dirichlet distribution $\Dir(c\alpha(A_1), \dots, c \alpha(A_n))$.

Recall from the introduction that an explicit construction of the Dirichlet process is given by the following formula \cite{Ferguson-1973, Sethuraman-1994}
\[
	P = \sum_{i=1}^\infty Q_i \delta_{Y_i},
\]
where $Y = (Y_1, Y_2, \dots)$ is a sequence of i.i.d.\ random variables with the common distribution $\alpha$, and $Q = (Q_1, Q_2, \dots)$ is a random point of the infinite-dimensional simplex 
\[
	\Sigma = \left\{x = (x_1, x_2, \dots) : x_1 \ge x_2 \ge \cdots \ge 0, \sum x_i = 1\right\},
\]
that is independent of $Y$ and has the Poisson-Dirichlet distribution with parameter $c$. The random point $Q$ can be viewed as the limit of a sequence of the ordered symmetric Dirichlet distribution \cite{Kingman-1975}. Indeed, let 
$w^{(N)} = (w_1^{(N)}, \dots, w_N^{(N)})$ have the symmetric Dirichlet distribution with parameter $c/N$. Arrange $w^{(N)}$ in descending order to get the vector 
\[
	p_1^{(N)} \ge p_2^{(N)} \ge \cdots \ge p_N^{(N)}.
\]
Then for any finite $k$, as $N \to \infty$, 
\[
	(p_1^{(N)}, \dots, p_k^{(N)}) \dto (Q_1, \dots, Q_k).
\]

Let $f \colon \R \to \R$ be a measurable function  satisfying
\begin{equation}\label{integrability}
	\int_\R \log(1 + |f(x)|) d\alpha(x) < \infty.
\end{equation}
That condition implies the finiteness of the integral $\bra{ P, |f|}$, almost surely \cite{Feigin-Tweedie-1989}. Then the following relation holds \cite{Cifarelli-Regazzini-1990}
\begin{equation}\label{MKR-original}
	\Ex[(z - \bra{P, f})^{-c}] = \exp\left(	- c \int_\R \log(z - f(u)) d\alpha(u)\right), \quad z \in \C \setminus \R.
\end{equation}
Let $\nu$ be the distribution of $\bra{ P, f}$, and $\mu$ be the distribution of $f$ under the measure $\alpha$. The above relation tells us a relation between two probability measures $\nu$ and $\mu$, 
\begin{equation}\label{MKR}
	\int_\R \frac{1}{(z - t)^c} d\nu(t) = \exp\left(	- c \int_\R \log(z - u) d\mu(u)\right), \quad z \in \C \setminus \R,
\end{equation}
which uniquely determines each other \cite{LR2004}.
We say that  $\nu$ and $\mu$ are linked by the Markov--Krein relation (MKR) with parameter $c$, and call $\nu$ the Markov--Krein transform (MKT) of $\mu$.

We are going to give a simple proof of that MKR. Let us begin with an example on the Dirichlet distribution. 	
\begin{example}\label{ex:Dirichlet-distribution}
Let $w = (w_1, \dots, w_N)$ have the Dirichlet distribution with parameters $(\tau_1, \dots, \tau_N)$, where $\tau_i > 0, i = 1, \dots, N$, and $a = (a_1, \dots, a_N) \in \R^N$ be a non-random vector. Let $\nu$ be the distribution of the random variable $(w_1 a_1 + \cdots + w_N a_N)$, and $\mu = \frac{1}{c}\sum_{i=1}^N \tau_i \delta_{a_i}$ be a discrete probability measure on $\R$, where $c = \tau_1 + \cdots + \tau_N$. Then $\nu$ and $\mu$ are linked by the Markov--Krein relation with parameter $c$, that is, for $z \in \C \setminus \R$,
\[
	\int_\R \frac{1}{(z - t)^{c}} d\nu(t) = \Ex[(z - (w_1 a_1 + \cdots + w_N a_N))^{-c}] =  \prod_{i=1}^N \left(\frac1{z - a_i}\right)^{\tau_i}.
\]
A fundamental proof of that identity can be found in \cite{Fourati-2011}.
\end{example}

\begin{proof}[Proof of the MKR~\eqref{MKR-original}]

When $f$ is a simple function, the relation follows from Example~\ref{ex:Dirichlet-distribution}. Indeed, assume that
\[
	f = \sum_{i=1}^N a_i \one_{A_i},
\]
where $\{a_i\}_{i=1}^N$ are real numbers and $\{A_i\}_{i=1}^N$ are a partition of $\R$. Here $\one_A$ denotes the indicator function of a set $A$. It is clear that
\[
	\bra{P, f} = \sum_{i=1}^N a_i P(A_i).
\]
By the definition of Dirichlet processes, the vector $(P(A_1), \dots, P(A_N))$ has the Dirichlet distribution with parameter $(c \alpha(A_1), \dots, c\alpha(A_N))$. In addition, the distribution $\mu$ of $f$ under $\alpha$ is a discrete probability measure 
\[
	\mu = \sum_{i=1}^N \delta_{a_i} \alpha(A_i),
\] 
from which the desired MKR follows from Example~\ref{ex:Dirichlet-distribution}.

Now let $f$ be a measurable function satisfying the integrability condition~\eqref{integrability}. Take a sequence of simple functions $\{f_N\}$ such that for $u \in \R$,
\[
	f_N(u) \to f(u) \quad \text{as}\quad N \to \infty, \quad |f_N(u)| \le |f(u)|.
\]
Since $f_N$ is a simple function, we have just shown that 
\begin{equation}\label{MKR-fN}
	\Ex[(z - \bra{P, f_N})^{-c}] = \exp\left( -c \int_\R \log(z - f_N(u))d\alpha (u)\right).
\end{equation}
Recall that the condition~\eqref{integrability} implies that $\bra{P, |f|}$ is finite, almost surely \cite{Feigin-Tweedie-1989}. Thus, by the Lebesgue dominated convergence theorem, 
\[
	\bra{P, f_N} \to \bra{P, f} \quad \text{as} \quad N \to \infty, \quad \text{almost surely}.
\]
Using the dominated convergence theorem again, we deduce that for $z \in \C \setminus \R$,
\[
	 \int_\R \log(z - f_N(u))d\alpha (u) \to  \int_\R \log(z - f(u))d\alpha (u) \quad \text{as} \quad N \to \infty.
\]
Therefore, let $N \to \infty$ in equation~\eqref{MKR-fN} and use the above two limits, we get the desired relation. The proof is complete.
\end{proof}

\section{Proof of Theorem~\ref{thm:intro-general}}
\label{sect:proof}

In Example~\ref{ex:Dirichlet-distribution}, when the Dirichlet distribution is symmetric, the measure $\mu$ is the empirical distribution of $a_1, \dots, a_N$. We formulate that special case in the following example.

\begin{example}\label{ex:symmetric}
Let $w = (w_1, \dots, w_N)$ have the symmetric Dirichlet distribution with parameter $c/N$. For $a = \{a_1, \dots, a_N\} \in \R^N$, we call a random probability measure 
\[
	\nu_N = \sum_{i=1}^N w_i \delta_{a_i}
\]
the spectral measure of $a$, and $L_N = N^{-1}\sum_{i=1}^N \delta_{a_i}$ the empirical measure of $a$. Then for any function $f \colon \R \to \R$, the following Markov--Krein relation  holds, 
\begin{equation}
	\Ex[(z - \bra{\nu_N, f})^{-c}] = \exp(-c \bra{L_N, \log(z - f(\cdot))}).
\end{equation}
\end{example}

\begin{proof}[Proof of Theorem~\rm\ref{thm:intro-general}]
Let $P\sim \DP(\rho, c)$ be a Dirichlet process with base distribution $\rho$ and scaling parameter $c$. We aim to show that for any bounded continuous function $f \colon \R \to \R$, 
\[
	\bra{sp_N, f} \dto \bra{P, f} \quad \text{as} \quad N \to \infty.
\]
It is worth mentioning that the convergence in distribution ($\dto$) of random variables is equivalent to the weak convergence ($\wto$) of their distributions. Thus, it suffices to prove that any limit point of the tight sequence $\{\bra{sp_N, f}\}$ has the same distribution with $\bra{P, f}$ which is the MKT of the distribution $\mu$ of $f$ under $\rho$.

Let $\nu$ be a weak limit of the sequence $\bra{sp_N, f}$. Without loss of generality, assume that the whole sequence converges to $\nu$, that is, 
\begin{equation}\label{weak-limit}
	\bra{sp_N, f}	\dto \nu \quad \text{as} \quad N \to \infty.
\end{equation}
For each $N$, it follows from Example~\ref{ex:symmetric} that
\begin{equation}\label{MKR-finite}
	\Ex[(z - \bra{sp_N, f})^{-c}] = \Ex\Big[\exp \big \{ -c \bra{L_N, \log(z - f(\cdot))} \big \} \Big], \quad z \in \C \setminus \R.
\end{equation}
Now since $\log(z - f(x))$ is a bounded continuous function, we deduce from the assumption that  as $N \to \infty$, 
\[
	\bra{L_N, \log(z - f(\cdot))} \to \bra{\rho, \log(z - f(\cdot))} =  \int \log(z - u) d\mu(u), \quad \text{in probability}.
\]
On the one hand, the right hand side of \eqref{MKR-finite} is convergent,
\begin{align*}
	 \Ex\Big[\exp \big \{ -c \bra{L_N, \log(z - f(\cdot))} \big \} \Big] 
	 &=  \exp \left( -c \int \log(z - u) d\mu(u)\right),
\end{align*}
as a consequence of the continuous mapping theorem and the bounded convergence theorem. On the other hand, the left hand side converges to $\int_\R (z - x)^{-c} d\nu(x) $ because of the weak convergence~\eqref{weak-limit}. We conclude that 
\[
	\int_\R \frac{1}{(z - x)^c} d\nu(x) = \exp \left( -c \int \log(z - u) d\mu(u)\right), \quad z \in \C \setminus \R.
\]
In other words, the weak limit $\nu$ coincides with the MKT of $\mu$. The proof is complete.
\end{proof}

\section{Classical beta ensembles on the real line}
\label{sect:classical}

\subsection{Gaussian beta ensembles}
In this subsection, we finish the proof of Theorem~\ref{thm:main-intro}. 
As mentioned in the introduction, in the high temperature regime $\beta = 2c/N$, the empirical distribution $L_N = N^{-1}\sum_{i=1}^N \delta_{\lambda_i}$ of the eigenvalues $\lambda_1, \dots, \lambda_N$ of $H_N$ converges weakly to a limiting probability measure $\rho_c$, almost surely. 
In addition, the spectral measure $sp_N$ of $H_{N}$ is written as
\begin{equation}\label{spN}
	sp_N = \sum_{i=1}^N w_i \delta_{\lambda_i},
\end{equation}
where the weights $(w_1, \dots, w_N)$ are independent of the eigenvalues $\lambda_i$'s and have the symmetric Dirichlet distribution with parameter $c/N$. Thus, Theorem~\ref{thm:intro-general} implies that the spectral measure $sp_N$ converges in distribution to a Dirichlet process $\DP(\rho_c, c)$. To prove Theorem~\ref{thm:main-intro}, it remains to show the following.
\begin{lemma}\label{lem:nuc}
	The spectral measure $sp_N$ of $H_N$ converges in distribution  to the spectral measure $\nu_c$ of $H_c$.
\end{lemma}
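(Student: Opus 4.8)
The plan is to read off the limit at the level of moments, using the moment characterization \eqref{spectral-measure-moment} of the spectral measure together with the joint convergence in distribution of the matrix entries recorded in the introduction. The first ingredient I would establish is a \emph{locality} property of the moments: for any Jacobi matrix $J$, the quantity $J^n(1,1)$ enumerates weighted closed walks of length $n$ on the half-line based at the first site, and such a walk cannot visit any index beyond $1 + \lfloor n/2\rfloor$. Hence there is a fixed polynomial $p_n$, independent of the size of the matrix, with $J^n(1,1) = p_n(a_1, \dots, a_{1+\lfloor n/2\rfloor}, b_1, \dots, b_{\lfloor n/2\rfloor})$. In particular, for every $N \ge 1 + \lfloor n/2\rfloor$ the same polynomial $p_n$ computes both $(H_N)^n(1,1)$ and $(H_c)^n(1,1)$ from the corner entries.

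Next I would upgrade the entrywise convergence to an almost sure statement by coupling. The joint convergence $\{a_1^{(N)}, \dots, b_k^{(N)}\} \dto \{a_1^{(\infty)}, \dots, b_k^{(\infty)}\}$ for every $k$ means that the full entry sequences of $H_N$ converge in distribution to those of $H_c$ in $\R^\infty$ equipped with the product topology, which is Polish. By Skorokhod's representation theorem I would realize, on a single probability space, copies $\tilde H_N$ and $\tilde H_c$ of $H_N$ and $H_c$ whose entry sequences converge coordinatewise almost surely. Since for each fixed $n$ and all large $N$ the moment $(J^n)(1,1)$ is the fixed polynomial $p_n$ of finitely many of these coordinates, this yields $(\tilde H_N)^n(1,1) \to (\tilde H_c)^n(1,1)$ almost surely for every $n$.

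The final step is to convert this moment convergence into weak convergence of the measures, and here the almost sure determinacy of the limit is what does the work. The off-diagonal entries of $H_c$ in \eqref{Hc} are i.i.d.\ $\chitilde_{2c}$, hence almost surely in $(0,\infty)$, so $1/b_n^{(\infty)}$ are i.i.d.\ positive and $\sum_n 1/b_n^{(\infty)} = \infty$ almost surely; by the sufficient condition \eqref{sp-sufficient}, $\tilde\nu_c$ is almost surely the unique measure with moments $(\tilde H_c)^n(1,1)$. Working pathwise on the Skorokhod space, the probability measures $\widetilde{sp}_N$ have $n$-th moments $(\tilde H_N)^n(1,1)$ converging to the moments of the determinate measure $\tilde\nu_c$, so the classical method of moments gives $\widetilde{sp}_N \wto \tilde\nu_c$ almost surely. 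Because the spectral measure is a measurable function of the entries, $\widetilde{sp}_N \overset{d}{=} sp_N$ and $\tilde\nu_c \overset{d}{=} \nu_c$ as random elements of $\cP(\R)$, and almost sure weak convergence implies convergence in distribution, yielding $sp_N \dto \nu_c$.

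I expect the main obstacle to be exactly the passage from convergence of moments — all that entrywise convergence directly supplies, via locality — to weak convergence of the \emph{random} measures, since moments pair the measures with the unbounded functions $x^n$ rather than with bounded continuous test functions. The device that closes this gap is the almost sure determinacy of $\nu_c$ combined with the pathwise method of moments after a Skorokhod coupling; this is precisely where condition \eqref{sp-sufficient} and the i.i.d.\ $\chitilde_{2c}$ structure of $H_c$ enter in an essential way. A secondary point needing care is the stabilization of the moment polynomials, namely that for each fixed $n$ the same $p_n$ governs $H_N$ and $H_c$ once $N$ is large, which is what lets coordinatewise almost sure convergence of the entries pass to almost sure convergence of each individual moment.
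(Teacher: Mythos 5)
Your argument is correct, and it is essentially the alternative proof that the paper itself sketches in the remark immediately following its official proof of Lemma~\ref{lem:nuc}: pass to an almost surely convergent realization of the entries, use the locality of $J^n(1,1)$ to get almost sure convergence of each moment, and invoke the pathwise method of moments (Proposition~\ref{prop:moments-convergence}) together with the almost sure determinacy of $\nu_c$ coming from \eqref{sp-sufficient} and the i.i.d.\ positivity of the $b_n^{(\infty)}$. The paper's primary route is different in one respect: instead of coupling, it stays at the level of convergence in distribution of the polynomial moments $\bra{sp_N,p}=p(H_N)(1,1)$ (obtained from the joint distributional convergence of the entries and the continuous mapping theorem) and then applies Theorem~\ref{thm:moment-distribution}, whose hypotheses it verifies by identifying the mean measure $\overline{sp}_N$ with $\bar L_N$ and citing known results on the empirical distribution. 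What your approach buys is that it bypasses the mean-measure bookkeeping and the uniform-integrability-type approximation inside Theorem~\ref{thm:moment-distribution}, at the cost of invoking Skorokhod representation in $\R^\infty$ and of checking that the spectral measure is a measurable function of the entry sequence (needed to transfer $\widetilde{sp}_N\overset{d}{=}sp_N$ and $\tilde\nu_c\overset{d}{=}\nu_c$ back); what the paper's route buys is a reusable equivalence (moment convergence in distribution versus convergence in distribution of random measures) that it exploits again later, e.g.\ in the concluding section. Two small points worth making explicit in a write-up: the divergence $\sum_n 1/b_n^{(\infty)}=\infty$ a.s.\ follows from the second Borel--Cantelli lemma since $\Prob(b_n^{(\infty)}\le M)>0$ for some $M$, and your bound $1+\lfloor n/2\rfloor$ on the range of a closed walk of length $n$ is exactly the stabilization the paper alludes to with ``for some polynomial $f$ of $2k$ variables, when $N$ is large enough.''
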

\begin{proof}
We use equivalent conditions of the convergence in distribution of random probability measures stated in Theorem~\ref{thm:moment-distribution}. We first need to verify the assumption of Theorem~\ref{thm:moment-distribution} on the mean measure $\overline{sp}_N$. Although we can do it directly, we use here the fact that the mean of $sp_N$ coincides with the mean of the empirical distribution $L_N$. And thus, the assumption follows from existing results on the empirical distributions. Here the mean of $sp_N$ coincides with the mean of $L_N$ because for any bounded continuous function $f$, 
\begin{align*}
	\bra{\overline {sp}_N, f} &= \Ex[\bra{sp_N, f}] \\
	&= \Ex \left[\sum_{i=1}^N w_i f(\lambda_i)\right] \\
	&= \sum_{i=1}^N \Ex[w_i] \Ex[f(\lambda_i)] \quad (\text{$(w_i)_i$ and $(\lambda_i)_i$ are independent}) \\
	&= \sum_{i=1}^N \frac1N \Ex[f(\lambda_i)] \quad (\text{$(w_i)_i$ has the symmetric Dirichlet distribution)} \\
	&= \Ex[\bra{ L_N, f}] = \bra{\bar L_N, f}.
\end{align*}

Now let $p$ be a polynomial. Then $\bra{sp_N, p} $ can be written as 
\[
	\bra{sp_N, p} = p(H_N)(1,1)= f(a_1^{(N)}, \dots, a_k^{(N)}, b_1^{(N)}, \dots, b_k^{(N)}),
\]
for some polynomial $f$ of $2k$ variables, when $N$ is large enough. It then follows from the joint convergence of the entries and the continuous mapping theorem that 
\[
	 f(a_1^{(N)}, \dots, a_k^{(N)}, b_1^{(N)}, \dots, b_k^{(N)}) \dto  f(a_1^{(\infty)}, \dots, a_k^{(\infty)}, b_1^{(\infty)}, \dots, b_k^{(\infty)}) = \bra{\nu_c, p}.
\]
Therefore, the statement (i) in Theorem~\ref{thm:moment-distribution} holds true, implying the statement (ii), or $sp_N$ converges in distribution to $\nu_c$. The lemma is proved.
\end{proof}

\begin{remark}
	An alternative proof of Lemma~\ref{lem:nuc} using the fact that the almost sure convergence implies the convergence in distribution is as follows. There is a realization such that almost surely, as $N \to \infty$,
\[
	a_i^{(N)} \to a_i^{(\infty)}, \quad b_i^{(N)} \to b_i^{(\infty)}, 
\]
for all $i$. In this realization, similar to the above argument using the continuous mapping theorem, moments of $sp_N$ converges to the corresponding moment of $\nu_c$ almost surely as $N \to \infty$. Therefore, it follows from Proposition~\ref{prop:moments-convergence} that the spectral measure $sp_N$ converges weakly to $\nu_c$, almost surely. The convergence in distribution then follows immediately.
\end{remark}

\subsection{Beta Laguerre ensembles}
The random matrix model for beta Laguerre ensembles was introduced in \cite{DE02}.
Let $J_N^{(L)} := B_N (B_N)^\top$ be a Jacobi matrix, where $B_N$ is a bidiagonal matrix consisting of independent random variables with distributions
\[
	B_N = \begin{pmatrix}
		\chitilde_{2\alpha + \beta(N-1)}	\\
		\chitilde_{\beta (N-1)}	&\chitilde_{2\alpha + \beta(N-2)}		\\
		&\ddots	&\ddots	\\
		&&\chitilde_{\beta}	&\chitilde_{2\alpha}
	\end{pmatrix}.
\]
Here $(B_N)^\top$ denotes the transpose of $B_N$, and 
$\alpha, \beta  > 0$.
Then the joint density of the eigenvalues of $J_N^{(L)}$ is proportional to 
\begin{equation}\label{bLE}
	 \prod_{i<j}|\lambda_j - \lambda_i|^{\beta}\prod_{l = 1}^N \left(\lambda_l^{\alpha-1} e^{- \lambda_l} \right),\quad  (\lambda_i > 0).
\end{equation}

For beta Laguerre ensembles, the high temperature regime refers to the case where $\alpha$ is fixed, and $\beta = 2c/N$ as $N \to \infty$.
In this regime, the empirical distribution $L_N$ of the eigenvalues converges weakly to a limiting probability measure $\rho_{\alpha, c}$, almost surely \cite{Allez-Wishart-2013, Trinh-Trinh-2021}. The measure $\rho_{\alpha, c}$ which is the spectral measure of the following Jacobi matrix 
\[
	 \begin{pmatrix}
		\sqrt{\alpha + c}		\\
		\sqrt{c+1}	&\sqrt{\alpha + c + 1}\\
		&\sqrt{c+2}	&\sqrt{\alpha + c + 2}\\
		&&\ddots	&\ddots	
	\end{pmatrix}
	\begin{pmatrix}
		\sqrt{\alpha + c}		&\sqrt{c+1}\\
			&\sqrt{\alpha + c + 1}		&\sqrt{c+2}\\
					&&\ddots	&\ddots	
	\end{pmatrix}
\]
is called the probability measure of Model II of associated Laguerre polynomials.

For fixed $\alpha > 0$ and $c > 0$, let 
\begin{equation}
	 B_{\alpha, c} = \begin{pmatrix}
		\chitilde_{2(\alpha + c)}	\\
		\chitilde_{2c}	&\chitilde_{2(\alpha + c)}\\
		&\chitilde_{2c}	&\chitilde_{2(\alpha + c)}\\
		&&	\ddots	&\ddots	
	\end{pmatrix}
\end{equation}
be a random bidiagonal matrix consisting of two independent sequences of i.i.d.\ random variables with distribution $\chitilde_{2c}$ and $\chitilde_{2(\alpha + c)}$. Define 
\begin{equation}
J_{\alpha, c} = B_{\alpha, c}(B_{\alpha, c})^\top,
\end{equation}
which is the limit in the high temperature regime $\beta = 2c/N$ of $J_N^{(L)}$.

\begin{theorem}
Let $\alpha , c > 0$. Then the spectral measure $sp_{\alpha, c}$ of the random Jacobi matrix $J_{\alpha, c}$ is a Dirichlet process $\DP(\rho_{\alpha, c}, c)$.
\end{theorem}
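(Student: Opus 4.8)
The plan is to replay the two-step argument behind Theorem~\ref{thm:main-intro} in the Gaussian case. First I would reduce the claim to Theorem~\ref{thm:intro-general} by exhibiting the finite-$N$ spectral structure, and then I would identify the distributional limit with $sp_{\alpha,c}$ through an entrywise convergence argument paralleling Lemma~\ref{lem:nuc}.

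The key structural input is that the spectral measure of the finite Laguerre matrix $J_N^{(L)} = B_N B_N^\top$ at the first coordinate vector has the form $sp_N^{(L)} = \sum_{i=1}^N w_i \delta_{\lambda_i}$, where $(\lambda_1, \dots, \lambda_N)$ are the eigenvalues with the beta Laguerre density~\eqref{bLE} and the weight vector $(w_1, \dots, w_N)$ has the symmetric Dirichlet distribution with parameter $\beta/2 = c/N$, independent of the eigenvalues. This is the Laguerre analogue of the Gaussian fact recalled in the introduction, and it follows from the Dumitriu--Edelman change of variables for the bidiagonal model: the Jacobian of the map sending the entries of $B_N$ to the spectral data $(\lambda_i, w_i)$ factorizes into an eigenvalue part, which reproduces~\eqref{bLE}, and a weight part, which reproduces the symmetric Dirichlet law. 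Granting this, since the empirical distribution $L_N$ converges weakly to $\rho_{\alpha,c}$ almost surely (hence in probability) in the high temperature regime $\beta = 2c/N$, Theorem~\ref{thm:intro-general} applies verbatim and yields $sp_N^{(L)} \dto \DP(\rho_{\alpha,c}, c)$.

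It then remains to prove the Laguerre analogue of Lemma~\ref{lem:nuc}, namely $sp_N^{(L)} \dto sp_{\alpha,c}$. For each fixed position $k$, with $\beta = 2c/N$ the $k$-th diagonal chi parameter of $B_N$ satisfies $2\alpha + \beta(N-k) \to 2(\alpha+c)$ and the subdiagonal parameter $\beta(N-k) \to 2c$, so the entries of $B_N$ converge in distribution to those of $B_{\alpha,c}$; by independence this upgrades to joint convergence, and hence the entries of $J_N^{(L)} = B_N B_N^\top$ converge jointly to those of $J_{\alpha,c}$. Since $\bra{sp_N^{(L)}, p} = p(J_N^{(L)})(1,1)$ is, for $N$ large, a fixed polynomial in finitely many entries, the continuous mapping theorem gives moment convergence $\bra{sp_N^{(L)}, p} \dto \bra{sp_{\alpha,c}, p}$. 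The mean-measure hypothesis of Theorem~\ref{thm:moment-distribution} is inherited from $\bar L_N$ exactly as in Lemma~\ref{lem:nuc}, so that theorem upgrades moment convergence to $sp_N^{(L)} \dto sp_{\alpha,c}$, and comparing the two limits forces $sp_{\alpha,c} \overset{d}{=} \DP(\rho_{\alpha,c}, c)$. Along the way I would check that $sp_{\alpha,c}$ is almost surely determined by its moments via~\eqref{sp-sufficient}: the off-diagonal entries of $J_{\alpha,c}$ are $b_k = d_k s_k$ with $d_k \sim \chitilde_{2(\alpha+c)}$ and $s_k \sim \chitilde_{2c}$ independent, forming an i.i.d.\ sequence of positive random variables, so $\sum_k 1/b_k = \infty$ almost surely.

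The main obstacle is the structural fact underlying the first step. In the Gaussian case the tridiagonal entries are sampled directly, whereas here the Jacobi matrix is the product $B_N B_N^\top$, so one must carry out the change of variables from the bidiagonal entries of $B_N$ to the pair (eigenvalues, spectral weights) and verify the clean factorization of the Jacobian into eigenvalue and weight parts. Once the symmetric Dirichlet law for the weights is established, the application of Theorem~\ref{thm:intro-general} and the entrywise identification of the limit are routine given the machinery already assembled.
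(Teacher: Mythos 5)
Your proposal is correct and follows exactly the route the paper intends: the paper omits the proof with the remark that ``arguments are exactly the same as those used in the Gaussian case,'' and your two-step argument (symmetric Dirichlet weights independent of the beta Laguerre eigenvalues, hence Theorem~\ref{thm:intro-general} gives $\DP(\rho_{\alpha,c},c)$; entrywise convergence of $B_N$ to $B_{\alpha,c}$ plus Theorem~\ref{thm:moment-distribution} gives the Laguerre analogue of Lemma~\ref{lem:nuc}) is precisely that Gaussian argument transplanted. Your additional checks --- the Dumitriu--Edelman factorization of the Jacobian for the bidiagonal model and the almost sure moment-determinacy of $sp_{\alpha,c}$ via $\sum_k 1/b_k=\infty$ --- are the right points to verify and are handled correctly.
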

We skip the proof because arguments are exactly the same as those used in the Gaussian case.

\subsection{Beta Jacobi ensembles}
We shortly mention the result in the Jacobi case. 
Let $p_1, \dots, p_N$ and $q_1, \dots, q_{N - 1}$ be independent random variables with beta distributions
\begin{align*}
	p_n &\sim \Beta((N - n) \frac \beta 2 + a, (N - n) \frac \beta 2 + b),\\
	q_n &\sim \Beta((N - n) \frac \beta 2, (N - n - 1) \frac \beta 2 + a + b),
\end{align*}
where $a, b > 0$ and $\beta > 0$.  
Let 
\begin{align*}
	s_n &= \sqrt{p_n(1 - q_{n - 1})},\quad n = 1, \dots, N, \quad (q_0 = 0),\\
	t_n &= \sqrt{q_n(1 - p_n)}, \quad n = 1, \dots, N - 1.
\end{align*}
Then the tridiagonal random matrix
\[
	J_{N}^{(J)} =  \begin{pmatrix}
		s_1	\\
		t_1	&s_2		\\
		&\ddots	&\ddots \\
		&& t_{N - 1}	& s_N
	\end{pmatrix}
	 \begin{pmatrix}
		s_1 	&t_1	\\
			&s_2		&t_2		\\
		&&\ddots	&\ddots \\
		&&& s_N
	\end{pmatrix}
\]
has the eigenvalues $(\lambda_1, \dots, \lambda_N)$ distributed as the beta Jacobi ensemble \cite{Killip-Nenciu-2004}, that is, the joint density of the eigenvalues is proportional to 
\[
	 \prod_{i < j} |\lambda_j - \lambda_i|^\beta \prod_{l = 1}^N \lambda_l^{a-1} (1 - \lambda_l)^{b-1}, \quad \lambda_i \in [0,1].
\]

In the high temperature regime where $a, b > 0$ are fixed and $\beta = 2c / N$, the empirical distribution converges weakly to a limiting measure $\rho_{a,b,c}$, almost surely, which is the probability measure of Modell III of associated Jacobi polynomials \cite{Trinh-Trinh-Jacobi}.

The limit of the Jacobi matrix $J_{N}^{(J)}$ is formed from two sequences of i.i.d. random variables 
\begin{align*}
	p_n^{(\infty)} &\sim \Beta(c + a, c + b),\\
	q_n^{(\infty)} &\sim \Beta(c, c+ a + b),
\end{align*}
in the same way as $J_N^{(J)}$, that is, 
\begin{align*}
	s_n^{(\infty)} &= \sqrt{p_n^{(\infty)}(1 - q_{n - 1}^{(\infty)})},\quad n = 1, 2, \dots, \quad (q_0^{(\infty)} = 0),\\
	t_n^{(\infty)} &= \sqrt{q_n^{(\infty)}(1 - p_n^{(\infty)})}, \quad n = 1,2, \dots,
\end{align*}
and 
\begin{equation}
	J_{a,b,c} =  \begin{pmatrix}
		s_1^{(\infty)}	\\
		t_1^{(\infty)}	&s_2^{(\infty)}		\\
		&\ddots	&\ddots \\
	\end{pmatrix}
	 \begin{pmatrix}
		s_1^{(\infty)} 	&t_1^{(\infty)}	\\
			&s_2^{(\infty)}		&t_2^{(\infty)}		\\
		&&\ddots	&\ddots \\
	\end{pmatrix}.
\end{equation}
Let $\nu_{a,b,c}$ be the spectral measure of $J_{a, b, c}$. Then we have the following result.
\begin{theorem}
The spectral measure $\nu_{a,b,c}$ is the Dirichlet process $\DP(\rho_{a, b, c}, c)$.
\end{theorem}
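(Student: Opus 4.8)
The plan is to reproduce the two-step argument used for Theorem~\ref{thm:main-intro} in the Gaussian case. Write $sp_N^{(J)}$ for the spectral measure of the finite matrix $J_N^{(J)}$. By the tridiagonalisation of \cite{Killip-Nenciu-2004}, one has $sp_N^{(J)} = \sum_{i=1}^N w_i \delta_{\lambda_i}$, where $(\lambda_1, \dots, \lambda_N)$ is the beta Jacobi ensemble and $(w_1, \dots, w_N)$, independent of the eigenvalues, has the symmetric Dirichlet distribution with parameter $\beta/2 = c/N$. I would establish the theorem by exhibiting two distributional limits of the single sequence $\{sp_N^{(J)}\}$ and matching them: first $sp_N^{(J)} \dto \DP(\rho_{a,b,c}, c)$, and second $sp_N^{(J)} \dto \nu_{a,b,c}$. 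Uniqueness of weak limits then forces $\nu_{a,b,c} \sim \DP(\rho_{a,b,c}, c)$.

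For the first limit I invoke Theorem~\ref{thm:intro-general} directly. The empirical distribution $L_N = N^{-1}\sum_i \delta_{\lambda_i}$ converges weakly to $\rho_{a,b,c}$ almost surely by the known high-temperature result for beta Jacobi ensembles \cite{Trinh-Trinh-Jacobi}, and the weights are symmetric Dirichlet with parameter $c/N$ and independent of the $\lambda_i$, so the hypotheses hold verbatim and the conclusion $sp_N^{(J)} \dto \DP(\rho_{a,b,c}, c)$ follows with no further computation.

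For the second limit I would argue exactly as in Lemma~\ref{lem:nuc}. With $\beta = 2c/N$ the beta parameters satisfy $(N-n)\beta/2 = c(1 - n/N) \to c$ for each fixed $n$, hence $p_n \dto \Beta(c+a, c+b) = p_n^{(\infty)}$ and $q_n \dto \Beta(c, c+a+b) = q_n^{(\infty)}$; by independence these convergences hold jointly, and the continuous maps $s_n = \sqrt{p_n(1 - q_{n-1})}$, $t_n = \sqrt{q_n(1 - p_n)}$ transfer the joint convergence to the tridiagonal entries of $J_N^{(J)}$ (diagonal $s_n^2 + t_{n-1}^2$, off-diagonal $s_n t_n$), which converge jointly to those of $J_{a,b,c}$. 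For any polynomial $p$ and $N$ larger than $\deg p$, $\bra{sp_N^{(J)}, p} = p(J_N^{(J)})(1,1)$ is a fixed polynomial in finitely many entries, so the continuous mapping theorem gives $\bra{sp_N^{(J)}, p} \dto \bra{\nu_{a,b,c}, p}$. Together with the mean-measure condition, which as in Lemma~\ref{lem:nuc} follows from $\overline{sp}_N^{(J)} = \bar L_N$ and the convergence of the empirical laws, the moment criterion of Theorem~\ref{thm:moment-distribution} yields $sp_N^{(J)} \dto \nu_{a,b,c}$.

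The one point genuinely new relative to the Gaussian case is the determinacy of $\nu_{a,b,c}$, needed before speaking of \emph{the} spectral measure of $J_{a,b,c}$; this is in fact easier here. Since $s_n^{(\infty)}, t_n^{(\infty)} \in [0,1]$, the bidiagonal factor has uniformly bounded entries, so $J_{a,b,c}$ is a bounded operator with spectrum in a compact subset of $\Rp$, its spectral measure is compactly supported, and the associated moment problem is automatically determinate, so \eqref{sp-sufficient} need not be invoked. I expect this determinacy bookkeeping, rather than any analytic estimate, to be the only step requiring attention; everything else transcribes the Gaussian argument.
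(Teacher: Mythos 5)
Your proposal is correct and follows exactly the route the paper intends: the paper states this theorem without proof, deferring (as in the Laguerre case) to the Gaussian argument, and your two-limit matching via Theorem~\ref{thm:intro-general} and the analogue of Lemma~\ref{lem:nuc} is precisely that argument transcribed, with the entrywise convergence of $p_n, q_n$ correctly identified. Your added observation that determinacy of $\nu_{a,b,c}$ is automatic from compact support (or, equally, from \eqref{sp-sufficient} since the off-diagonal entries are bounded by $1$) is a valid way to close the one point the paper leaves implicit.
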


\section{Concluding remarks on beta ensembles on the real line}
Consider the following beta ensembles
\begin{equation}\label{VbE}
	(\lambda_1, \lambda_2, \dots, \lambda_N) \propto \frac{1}{Z_{\beta, N}} \left( \prod_{i<j}|\lambda_j - \lambda_i|^\beta \right) \times e^{-\sum_{i = 1}^N V(\lambda_i)},
\end{equation}
where $Z_{N, \beta}$ is the normalizing constant. 	Assume that the potential $V$ is continuous and that 
	\[
		\lim_{x \to \pm \infty} \frac{V(x)}{\log (1+x^2)} = \infty.
	\]
Then in the regime where $\beta N \to 2c \in (0, \infty)$, 
the sequence of empirical distributions $\{L_N\}$ converges weakly to a probability measure $\rho_c$, almost surely \cite{Nakano-Trinh-2020}. The measure $\rho_c$ is the unique minimizer of the functional $\E_c$ defined for absolutely continuous probability measure $\mu(dx) = \rho(x) dx$,
\[
	\E_c(\mu) = \int \log(\rho(x)) \rho(x) dx + \int V(x) \rho(x) dx  - c \int \log|x - y| \rho(x) \rho(y) dx dy.
\]
Moreover, $\rho_c$ has a continuous density $\rho_c(x)$ satisfying the relation 
\[
\rho_c(x) = \frac{1}{Z_{c} }e^{-V(x) + 2 c  \int \log|x - y| \rho_c(y) dy}, \quad \text{for all $x \in \R$,}
\]
where $Z_c$ is a constant.

A random tridiagonal model for those beta ensembles can also be constructed.
For each $N$, consider the Jacobi matrix 
\[
	J_N = \begin{pmatrix}
		a_1^{(N)}	&b_1^{(N)}\\
		b_1^{(N)}	&a_2^{(N)}		&b_2^{(N)}\\
		&\ddots		&\ddots	&\ddots\\
		&&b_{N-1}^{(N)}	&a_N^{(N)}
	\end{pmatrix},
\]
where $\{a_i\}_{i=1}^N$ and $\{b_j\}_{j=1}^{N-1}$ have the joint density proportional to 
\begin{equation}\label{joint-a-b}
 e^{- \tr(V(J_N))}\prod_{j=1}^{N-1} b_j^{\beta(N-j) - 1} , \quad a_i \in \R, b_j > 0.
\end{equation}
Here for simplicity, we have removed the superscript ${}^{(N)}$ in the above formula.
Then the eigenvalues of $J_N$ follow the beta ensemble~\eqref{VbE}. Moreover, the spectral measure of $J_N$ has the form,
\[
	sp_N = \sum_{i=1}^N w_i \delta_{\lambda_i},
\]
where $w = (w_1, \dots, w_N)$ has the symmetric Dirichlet distribution with parameter $\frac \beta 2$ and is independent of $\{\lambda_i\}_{i=1}^N$.

As a consequence of Theorem~\ref{thm:intro-general}, the spectral measure $sp_N$ converges weakly to a Dirichlet process $P \sim \DP(\rho_c, c)$ with base distribution $\rho_c$ and scaling parameter $c$.
Construct the Jacobi matrix of the (random) probability measure $P$
\[
	\cJ_c = \begin{pmatrix}
		a_1^{(\infty)}	&b_1^{(\infty)}\\
		b_1^{(\infty)}	&a_2^{(\infty)}	&b_2^{(\infty)}\\
		&\ddots		&\ddots	&\ddots
	\end{pmatrix}.
\]
The convergence of the spectral measures $sp_N$ implies that the random matrix $J_N$ converges in distribution to $\cJ_c$ as stated in the following theorem.

\begin{theorem}
	For any $n$, 
	\[
		(a_1^{(N)}, b_1^{(N)},  \dots, a_n^{(N)}, b_n^{(N)})	\dto (a_1^{(\infty)}, b_1^{(\infty)},  \dots, a_n^{(\infty)}, b_n^{(\infty)})\quad \text{as}\quad N \to \infty.
	\]
\end{theorem}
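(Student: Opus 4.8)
The plan is to run the argument of Lemma~\ref{lem:nuc} in reverse. There we passed from joint convergence of the entries to convergence of the spectral measure; here we already know, as a consequence of Theorem~\ref{thm:intro-general}, that $sp_N \dto P$ with $P \sim \DP(\rho_c, c)$, and we wish to recover convergence of the entries. The bridge is the classical bijection between a Jacobi matrix and its spectral measure: the entries $a_i, b_j$ are read off the moments $m_k = \bra{\mu, x^k}$ of the spectral measure $\mu$ through the three-term recurrence, and for each fixed $n$ the truncated parameter vector $(a_1, b_1, \dots, a_n, b_n)$ is a fixed rational function $\Phi_n$ of $(m_1, \dots, m_{2n})$ whose denominators are Hankel determinants $D_0, D_1, \dots$ of the moment sequence. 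By the moment characterization \eqref{spectral-measure-moment}, $\bra{sp_N, x^k} = J_N^k(1,1)$, so for $N > n$ the first entries of $J_N$ are exactly $\Phi_n$ evaluated at the moments of $sp_N$, and the same $\Phi_n$ produces the entries of $\cJ_c$ from the moments of $P$.

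It therefore suffices to establish the joint convergence in distribution
\[
	(\bra{sp_N, x}, \dots, \bra{sp_N, x^{2n}}) \dto (\bra{P, x}, \dots, \bra{P, x^{2n}}), \quad N \to \infty,
\]
and then apply the continuous mapping theorem. This moment convergence is not a formal consequence of $sp_N \dto P$ in the weak topology, since the test functions $x^k$ are unbounded; instead I would invoke the equivalence in Theorem~\ref{thm:moment-distribution} between convergence in distribution of the random measures (its statement (ii)) and convergence in distribution of their moments (its statement (i)). The hypothesis of that theorem concerns the mean measure $\overline{sp}_N$, which, exactly as computed in Lemma~\ref{lem:nuc}, coincides with $\overline{L_N}$; the required control then follows from the known behaviour of the empirical distributions under the confining potential $V$, whose super-logarithmic growth guarantees uniformly bounded moments and ensures that $\rho_c$, hence $P$, is determined by its moments. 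Having verified the hypothesis and knowing (ii), the theorem yields (i), i.e. the displayed joint moment convergence.

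It remains to pass through $\Phi_n$. The continuous mapping theorem applies provided $\Phi_n$ is continuous at the limiting moment vector almost surely. Being rational with denominators given by finitely many Hankel determinants, $\Phi_n$ is continuous wherever those determinants are nonzero, i.e. wherever the underlying measure has strictly more than $n$ atoms. The limit $P$ is, almost surely, supported on infinitely many distinct points: in the Sethuraman construction $P = \sum_i Q_i \delta_{Y_i}$ recalled in Section~\ref{sect:MKT}, the Poisson--Dirichlet weights are almost surely all positive and the $Y_i$ are almost surely distinct because $\rho_c$ is non-atomic. Hence every Hankel determinant of $P$ is strictly positive almost surely, placing the limiting moment vector in the continuity set of $\Phi_n$, and we conclude
\[
	(a_1^{(N)}, b_1^{(N)}, \dots, a_n^{(N)}, b_n^{(N)}) = \Phi_n(\bra{sp_N, x}, \dots, \bra{sp_N, x^{2n}}) \dto \Phi_n(\bra{P, x}, \dots, \bra{P, x^{2n}}) = (a_1^{(\infty)}, b_1^{(\infty)}, \dots, a_n^{(\infty)}, b_n^{(\infty)}).
\]

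The main obstacle is the moment-convergence step. Upgrading weak convergence in distribution of the random measures to convergence in distribution of the unbounded moment functionals is precisely what fails for a general weakly convergent sequence, and making it work requires the uniform moment control encoded in the hypothesis of Theorem~\ref{thm:moment-distribution} together with the fact that the limit is determined by its moments. Verifying this control for the beta ensemble \eqref{VbE} with a general potential $V$, rather than for the explicit Gaussian, Laguerre and Jacobi weights, is the delicate point; once it is in place, the remaining bijection and continuous-mapping arguments are routine.
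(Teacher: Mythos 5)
Your proposal is correct and follows essentially the same route as the paper: both arguments first obtain the joint convergence in distribution of the moment vector $(\bra{sp_N,x},\dots,\bra{sp_N,x^{2n}})$ via Theorem~\ref{thm:moment-distribution} and then recover the entries by the continuous mapping theorem applied to the moments-to-Jacobi-parameters map. The only cosmetic difference is that the paper inverts this map recursively, using the leading-term structure of $\bra{sp_N,x^k}=J_N^k(1,1)$ together with the assertion that $b_j^{(\infty)}>0$ almost surely, whereas you phrase the inversion through Hankel determinants and supply the justification (almost-sure infinite support of $P$ via the Sethuraman construction) that the paper leaves implicit.
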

\begin{remark}
Note that in the three classical cases, the limiting Jacobi matrix $\cJ_c$ naturally appears as the limit of the sequence of random Jacobi matrix $J_N$. However,  $\cJ_c$ here has been defined from the Dirichlet process $P$. It would be interesting to explicitly describe the joint distribution of entries of $\cJ_c$.
\end{remark}
\begin{proof}

Theorem~\ref{thm:moment-distribution} implies that for any $k$, 
\[
	(\bra{sp_N, x}, \dots, \bra{sp_N, x^k}) \dto (\bra{P, x}, \dots, \bra{P, x^k}).
\]
Consequently, it holds that $a_1^{(N)} = \bra{sp_N, x} \dto a_1^{(\infty)} = \bra{P,x}$. Next, since 
\[
	\bra{sp_N, x^2} = (a_1^{(N)})^2 +  (b_1^{(N)})^2,
\]
it implies that 
\[
	(a_1^{(N)}, b_1^{(N)}) \dto (a_1^{(\infty)}, b_1^{(\infty)}),
\]
by the continuous mapping theorem. In general, note that for $k = 2n+1$,
\[
	\bra{sp_N, x^{2n+1}} = a_{n+1}^{(N)}\prod_{j=1}^{n} (b_j^{(N)})^2  + \text{polynomial\,}\left(\{a_i^{(N)}, b_i^{(N)}\}_{i=1}^n\right),
\]
and that for $k = 2(n+1)$, 
\[
	\bra{sp_N, x^{2n+2}} = (b_{n+1}^{(N)})^2\prod_{j=1}^{n} (b_j^{(N)})^2  + \text{polynomial\,}\left(\{a_i^{(N)}, b_i^{(N)}\}_{i=1}^n\right).
\]
From which, the desired joint convergence follows from the continuous mapping theorem by taking into account of $b_j^{(\infty)} > 0$, almost surely. The proof is complete.
\end{proof}

\appendix
\section{Convergence of random probability measures on the real line}\label{sect:rpm}

We first introduce equivalent definitions of convergence of random probability measures in terms of test functions. Refer to \cite[Section 4]{Kallenberg} for the part on convergence in distribution.
\begin{definition}\label{def:rm}
\begin{itemize}
\item[(i)]
	Let $\{\xi_N\}_{N = 1}^\infty$ and $\xi$ be random probability measures defined on the same probability space $(\Omega, \F, \Prob)$. The sequence $\{\xi_N\}$ is said to converge weakly to $\xi$, almost surely  if for any bounded continuous function $f$, 
	\[
		\bra{\xi_N, f} \to \bra{\xi, f} \quad \text{as}\quad N \to \infty, \quad \text{almost surely.}
	\]

\item[(ii)] Let $\{\xi_N\}_{N = 1}^\infty$ and $\xi$ be random probability measures defined on the same probability space $(\Omega, \F, \Prob)$. The sequence $\{\xi_N\}$ is said to converge weakly to $\xi$, in probability if for any bounded continuous function $f$, 
	\[
		\bra{\xi_N, f} \to \bra{\xi, f} \quad \text{in probability as}\quad N \to \infty.
		\]
When $\xi$ is non-random, the condition that all random probability measures are defined on the same probability space is not necessary.

\item[(iii)] 
	Let $\{\xi_N\}_{N =1}^\infty$ and $\xi$ be random probability measures which may be defined on different probability spaces. The sequence $\{\xi_N\}$ is said to converge in distribution to $\xi$ if for any bounded continuous function $f$, 
	\[
		\bra{\xi_N, f} \dto \bra{\xi, f} \quad \text{as}\quad N \to \infty.
	\]

\end{itemize}
\end{definition}

It is well known that the convergence of moments implies the weak convergence of probability measures, provided that the limiting measure is determined by moments. We have an analogous result for random probability measures whose proof can be found in \cite{Trinh-ojm-2018}, for example.
\begin{proposition}\label{prop:moments-convergence}
Let $\{\xi_N\}_{N = 1}^\infty$ and $\xi$ be random probability measures defined on the same probability space $(\Omega, \F, \Prob)$.
Assume that the random probability measure $\xi$ is determined by moments, almost surely. Then the condition 
\[
	\bra{ \xi_N, x^k} \to \bra{ \xi, x^k} \text{ almost surely, for } k = 0,1, \dots,
\]
implies that $\{\xi_N\}$ converges weakly to $\xi$, almost surely. Here we have assumed that every moment is finite, almost surely. Moreover,  if $f$ is a continuous function of polynomial growth, that is, there is a polynomial $p$ such that $|f(x)|\le p(x)$ for all $x\in \R$, then it holds that 
\[
	\bra{ \xi_N, f } \to \bra{ \xi, f}	\quad \text{as} \quad N\to \infty, \quad \text{almost surely.}
\]
The statement still holds true, if we replace the almost sure convergence by the convergence in probability.
\end{proposition}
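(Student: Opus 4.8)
The plan is to deduce the statement from the classical deterministic method of moments applied separately at each sample point $\omega$, and then to pass from almost sure convergence to convergence in probability via the subsequence principle. First I would isolate the deterministic engine: if $\{\mu_N\}$ and $\mu$ are non-random probability measures on $\R$ with all moments finite, $\mu$ determined by its moments, and $\bra{\mu_N, x^k} \to \bra{\mu, x^k}$ for every $k$, then $\mu_N \wto \mu$, and moreover $\bra{\mu_N, f} \to \bra{\mu, f}$ for every continuous $f$ of polynomial growth. Granting this engine, the almost sure part of the proposition is immediate. The event
\[
	\Omega_0 = \Big\{\omega : \bra{\xi_N(\omega), x^k} \to \bra{\xi(\omega), x^k} \text{ for all } k \ge 0 \Big\} \cap \{\omega : \xi(\omega) \text{ is determined by moments}\}
\]
has probability one, being a countable intersection of probability-one events intersected with the almost sure determinacy hypothesis. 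For each $\omega \in \Omega_0$ the two deterministic probability measures $\xi_N(\omega)$ and $\xi(\omega)$ satisfy the hypotheses of the engine, which yields $\xi_N(\omega) \wto \xi(\omega)$ together with $\bra{\xi_N(\omega), f} \to \bra{\xi(\omega), f}$ for polynomial-growth $f$; this is exactly the asserted almost sure convergence.

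To prove the engine I would argue in the standard way. Boundedness of $\bra{\mu_N, x^2}$ (it converges, hence is bounded) together with Markov's inequality gives $\mu_N(|x| > R) \le C/R^2$ uniformly in $N$, so $\{\mu_N\}$ is tight. By Prokhorov's theorem every subsequence admits a weakly convergent sub-subsequence $\mu_{N_j} \wto \mu'$. Since for each $k$ the moments $\bra{\mu_N, x^{2k}}$ are bounded, the estimate $\bra{\mu_{N_j}, |x|^k \one_{|x|>R}} \le R^{-k}\sup_j \bra{\mu_{N_j}, |x|^{2k}}$ shows that $\{x^k\}$ is uniformly integrable along $\{\mu_{N_j}\}$, so weak convergence upgrades to $\bra{\mu_{N_j}, x^k} \to \bra{\mu', x^k}$; as the left-hand side also tends to $\bra{\mu, x^k}$, the limit $\mu'$ shares all moments with $\mu$. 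Determinacy forces $\mu' = \mu$, and since every subsequence has a further subsequence converging to the same $\mu$, the whole sequence converges weakly. For a continuous $f$ with $|f(x)| \le C(1 + |x|^m)$, I would fix an even integer $2\ell > m$; boundedness of $\bra{\mu_N, |x|^{2\ell}}$ yields uniform integrability of $\{f\}$ under $\{\mu_N\}$, which combined with the weak convergence just established gives $\bra{\mu_N, f} \to \bra{\mu, f}$.

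Finally, for the convergence-in-probability version I would invoke the subsequence principle. Given any subsequence, a diagonal extraction over the countably many indices $k$ produces a further subsequence along which $\bra{\xi_N, x^k}$ converges to $\bra{\xi, x^k}$ almost surely for every $k$ simultaneously; the almost sure case then applies on this sub-subsequence and yields $\bra{\xi_N, f} \to \bra{\xi, f}$ almost surely, hence in probability. Since every subsequence contains such a further subsequence, $\bra{\xi_N, f}$ converges to $\bra{\xi, f}$ in probability. I expect the main obstacle to be the uniform-integrability step that upgrades plain weak convergence to convergence of the individual moments and of polynomial-growth test functions: this is precisely where the finiteness of all moments and the determinacy hypothesis are consumed, and it must be arranged so that a single bounded higher moment controls all lower-order contributions uniformly in $N$.
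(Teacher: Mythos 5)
Your proof is correct. Note that the paper itself does not prove Proposition~\ref{prop:moments-convergence} at all --- it only cites \cite{Trinh-ojm-2018} --- so there is no in-paper argument to compare against; your write-up is the standard one (pointwise application of the deterministic method of moments on an almost-sure event, with tightness from the second moment, uniform integrability via a higher even moment to upgrade weak convergence to convergence of moments and of polynomial-growth integrals, determinacy to identify subsequential limits, and the diagonal/subsequence principle for the in-probability variant), and all the steps check out.
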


We are now in a position to deal with the convergence of moments in the convergence in distribution.
Let $\xi$ be a random probability measure. The mean of $\xi$, denoted by $\bar \xi$, is  a probability measure defined by 
\[
	\bar \xi(B) = \Ex[\xi(B)], \quad B \in \cB(\R).
\]
It turns out that 
\[
	\bra{\bar \xi, f} = \Ex[\bra{\xi, f}],
\]
for any bounded measurable function $f \colon \R \to \R$, and any non-negative measurable function $f\colon \R \to [0, \infty)$.

Assume that the mean probability measure $\bar \xi$ has all finite moments. Then all moments of $\xi$ are finite, almost surely, and it holds that
\[
	\bra{\bar \xi, x^n} = \Ex[\bra{\xi, x^n}], \quad n = 0,1,2,\dots, 
\]
or 
\[
	\bra{\bar \xi, f} = \Ex[\bra{\xi, f}],
\]
for continuous functions $f$ of polynomial growth.

\begin{theorem}\label{thm:moment-distribution}
Let $\{\xi_N\}_{N = 1}^\infty$ and $\xi$ be random probability measures such that their mean probability measures have all finite moments. Assume that for any $k = 0,1,2,\dots,$
\[
	\bra{\bar \xi_N, x^k} \to \bra{\bar \xi, x^k} \quad \text{as} \quad N \to \infty,
\]
and that  $\bar \xi$ is determined by moments. Then the following are equivalent.
\begin{itemize}
\item [\rm (i)]
For any polynomial $p$, 
\begin{equation}\label{moment-convergence}
	\bra{\xi_N, p} \dto \bra{\xi, p} \quad \text{as} \quad N \to \infty.
\end{equation}

\item[\rm(ii)] $\xi_N$ converges in distribution to $\xi$ as $N \to \infty$.

\end{itemize}
\end{theorem}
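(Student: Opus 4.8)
The plan is to prove the two implications separately. Two consequences of the hypotheses will be used throughout: the uniform moment bound $\sup_N \bra{\bar\xi_N, x^{2k}} < \infty$ (immediate from $\bra{\bar\xi_N, x^{2k}} \to \bra{\bar\xi, x^{2k}} < \infty$), and the weak convergence $\bar\xi_N \wto \bar\xi$ of the mean measures, which follows from the classical method of moments since $\bar\xi$ is determined by moments.

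\emph{Direction} $\mathrm{(ii)} \Rightarrow \mathrm{(i)}$. Fix a polynomial $p$ of degree $d$ and a continuous cutoff $\phi_R$ with $\one_{[-R,R]} \le \phi_R \le \one_{[-R-1, R+1]}$. As $p\phi_R$ is bounded continuous, Definition~\ref{def:rm}(iii) and the continuous mapping theorem for $\mu \mapsto \bra{\mu, p\phi_R}$ give $\bra{\xi_N, p\phi_R} \dto \bra{\xi, p\phi_R}$, while $\bra{\xi, p\phi_R} \to \bra{\xi, p}$ almost surely as $R \to \infty$ by dominated convergence (the limit has finite moments a.s.). The truncation error is uniform in $N$: for an even $m > d$ one has $\Ex[\bra{\xi_N, |p|\one_{\{|\cdot|>R\}}}] \le C R^{-(m-d)} \bra{\bar\xi_N, |x|^m}$, which tends to $0$ as $R \to \infty$ by the moment bound, so by Markov's inequality $\bra{\xi_N, p\phi_R}$ approximates $\bra{\xi_N, p}$ uniformly in probability. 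A converging-together (Slutsky) argument then yields $\bra{\xi_N, p} \dto \bra{\xi, p}$. Running the same truncation coordinatewise, via the continuous mapping theorem for $\mu \mapsto (\bra{\mu, f_1}, \dots, \bra{\mu, f_r})$, upgrades this to joint convergence of any finite vector of polynomial integrals.

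\emph{Direction} $\mathrm{(i)} \Rightarrow \mathrm{(ii)}$. Since $\bar\xi_N \wto \bar\xi$, the family $\{\bar\xi_N\}$ is tight, and by the tightness criterion for random measures \cite{Kallenberg} this is equivalent to tightness of the laws of $\{\xi_N\}$ on $\cP(\R)$; hence $\{\xi_N\}$ is relatively compact in distribution. Let $\xi_{N'} \dto \eta$ be any subsequential limit. Portmanteau applied to the nonnegative lower semicontinuous functional $\mu \mapsto \bra{\mu, x^{2k}}$ gives $\Ex[\bra{\eta, x^{2k}}] \le \liminf \bra{\bar\xi_{N'}, x^{2k}} < \infty$, so $\eta$ has finite moments a.s. The already-proved direction then applies to the subsequence and, together with the $L^2$-bound $\Ex[\bra{\xi_N, x^k}^2] \le \bra{\bar\xi_N, x^{2k}}$ giving uniform integrability, yields $\bra{\xi_{N'}, x^k} \dto \bra{\eta, x^k}$ and $\bra{\bar\xi_{N'}, x^k} \to \bra{\bar\eta, x^k}$; comparing with the hypotheses forces $\bar\eta = \bar\xi$. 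Applying the Cram\'er--Wold device to hypothesis $\mathrm{(i)}$ (each polynomial integral is a linear combination of moments) gives, for every $j$, the joint convergence $(\bra{\xi_{N'}, x^0}, \dots, \bra{\xi_{N'}, x^j}) \dto (\bra{\xi, x^0}, \dots, \bra{\xi, x^j})$, and the same device on the subsequence gives the identical statement with $\eta$ in place of $\xi$. By uniqueness of distributional limits, the moment sequences of $\xi$ and $\eta$ have identical finite-dimensional distributions, hence the same law on $\R^\infty$. Provided $\xi$ and $\eta$ are almost surely determined by their moments, the reconstruction of a determinate measure from its moment sequence is measurable, so this equality of laws forces $\eta \overset{d}{=} \xi$. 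Since every subsequential limit equals $\xi$ in law and the sequence is tight, $\xi_N \dto \xi$, which is $\mathrm{(ii)}$.

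The hard part will be the identification step, specifically the claim that $\xi$ and $\eta$ are determined by their random moments almost surely. The hypothesis only gives determinacy of the deterministic mean $\bar\xi$, which concerns a different moment sequence, so a transfer lemma is required. I expect to obtain it by first upgrading determinacy of $\bar\xi$ to Carleman's condition and then establishing $\sum_k \bra{\xi, x^{2k}}^{-1/2k} = \infty$ almost surely by a Borel--Cantelli argument on the events $\{\bra{\xi, x^{2k}} > k^2 \bra{\bar\xi, x^{2k}}\}$, whose probabilities sum by Markov's inequality. The accompanying measurability of the moment-reconstruction map is a standard measurable-selection fact (Lusin--Souslin). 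These two points, rather than the method-of-moments skeleton above, are where the real work lies.
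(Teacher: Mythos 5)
Your direction (ii) $\Rightarrow$ (i) is fine and is in substance the same as the paper's: the paper also truncates the polynomial (using the cutoff $t_k = (p\wedge k)\vee(-k)$ rather than a smooth $\phi_R$) and closes the argument with the converging-together lemma \cite[Theorem 25.5]{Billingsley}, controlling the truncation error uniformly in $N$ through the convergence $\bra{\bar\xi_N, |p-t_k|} \to \bra{\bar\xi, |p-t_k|}$.

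The gap is in your direction (i) $\Rightarrow$ (ii), precisely where you flagged it. Your identification step needs $\xi$ (and the subsequential limit $\eta$) to be determined by their moments \emph{almost surely}, and your proposed route is to ``upgrade determinacy of $\bar\xi$ to Carleman's condition'' and then transfer Carleman to $\xi$ by Borel--Cantelli. The second half of that plan would work, but the first half cannot: Carleman's condition is sufficient but not necessary for determinacy, and there exist determinate measures with all moments finite that violate it. So under the theorem's actual hypothesis (determinacy of $\bar\xi$, nothing more) there is no such upgrade, and your argument does not close. Moreover, almost sure determinacy of $\xi$ is simply not available from the hypotheses by any route of this kind, so the whole ``tightness + subsequential limit + reconstruct the measure from its random moment sequence'' skeleton is structurally committed to an assumption the theorem does not grant you.

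The paper avoids this entirely. For (i) $\Rightarrow$ (ii) it fixes a bounded continuous $f$ and approximates it by polynomials $p_k$ with $\bra{\bar\xi, (f-p_k)^2} \to 0$; such $p_k$ exist because determinacy of $\bar\xi$ implies (M.~Riesz) that polynomials are dense in $L^2(\bar\xi)$ --- this is the only place determinacy is used, and it is used for the \emph{mean} measure only. Then Jensen's inequality gives $\Ex[(\bra{\xi_N,p_k}-\bra{\xi_N,f})^2] \le \bra{\bar\xi_N,(p_k-f)^2}$, and the same triangle lemma \cite[Theorem 25.5]{Billingsley} you used in the other direction yields $\bra{\xi_N,f} \dto \bra{\xi,f}$ directly, with no tightness, no subsequences, and no need to reconstruct $\xi$ from its moments. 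If you want to salvage your own write-up, the cleanest fix is to replace the identification-via-moment-sequences step by this $L^2(\bar\xi)$-approximation argument.
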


\begin{remark}
The statement (i) is equivalent to the condition that for any $k$, 
\[
	(\bra{\xi_N, x}, \dots, \bra{\xi_N, x^k}) \dto (\bra{\xi, x}, \dots, \bra{\xi, x^k})
\]
as the joint convergence in distribution of random variables.
\end{remark}

To prove this theorem, we will use the following approximation argument.
\begin{lemma}[{\cite[Theorem 25.5]{Billingsley}}]\label{lem:triangle}
Let $\{Y_N\}_N$ and $\{X_{N,k}\}_{N, k}$ be real-valued random variables. Assume that
\begin{itemize}
	\item[\rm(a)] 
		$
			X_{N,k} \dto X_k  \text{ as }N \to \infty;
		$
	\item[\rm(b)]
		$
			X_k \dto X \text{ as }  k \to \infty;
		$
	\item[\rm(c)] for any $\varepsilon > 0$,
		$
			\lim_{k \to \infty} \limsup_{N \to \infty} \Prob(|X_{N,k} - Y_N| \ge \varepsilon) =0.
		$
\end{itemize}
Then $Y_N \dto X$ as $N \to \infty$.
\end{lemma}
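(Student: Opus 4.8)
The plan is to verify convergence in distribution through a convergence-determining class of test functions, namely the bounded Lipschitz functions: it suffices to show that $\Ex[g(Y_N)] \to \Ex[g(X)]$ as $N \to \infty$ for every function $g$ that is bounded by some $M$ and Lipschitz with constant $L$. The entire argument is a three-term triangle inequality in which each intermediate quantity is controlled by exactly one of the hypotheses (a)--(c), with the iterated limit $\lim_k \limsup_N$ as the organizing principle.

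First I would insert the auxiliary variables $X_{N,k}$ and $X_k$ and bound
\[
	|\Ex[g(Y_N)] - \Ex[g(X)]| \le \Ex|g(Y_N) - g(X_{N,k})| + |\Ex[g(X_{N,k})] - \Ex[g(X_k)]| + |\Ex[g(X_k)] - \Ex[g(X)]|.
\]
For fixed $k$, the middle term tends to $0$ as $N \to \infty$ by hypothesis (a), and the last term tends to $0$ as $k \to \infty$ by hypothesis (b). For the first term I would combine the two elementary bounds $|g(Y_N) - g(X_{N,k})| \le L|Y_N - X_{N,k}|$ and $|g(Y_N) - g(X_{N,k})| \le 2M$, splitting the expectation on the event $\{|Y_N - X_{N,k}| \ge \varepsilon\}$, to obtain
\[
	\Ex|g(Y_N) - g(X_{N,k})| \le L\varepsilon + 2M \, \Prob(|Y_N - X_{N,k}| \ge \varepsilon).
\]

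Combining these estimates, I would fix $\varepsilon > 0$, take $\limsup_{N \to \infty}$ of the triangle inequality (which annihilates the middle term and turns the first term into $L\varepsilon + 2M \limsup_N \Prob(|Y_N - X_{N,k}| \ge \varepsilon)$), and only then let $k \to \infty$. Hypothesis (c) forces $\lim_{k \to \infty} \limsup_N \Prob(|Y_N - X_{N,k}| \ge \varepsilon) = 0$, while hypothesis (b) disposes of the last term, leaving
\[
	\limsup_{N \to \infty} |\Ex[g(Y_N)] - \Ex[g(X)]| \le L\varepsilon.
\]
Since $\varepsilon > 0$ is arbitrary, the left-hand side is $0$, so $\Ex[g(Y_N)] \to \Ex[g(X)]$; as $g$ ranged over all bounded Lipschitz functions, the conclusion $Y_N \dto X$ follows.

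The step I expect to be the main obstacle is the careful bookkeeping of the iterated limit. Because $X_{N,k}$ depends on both indices, one cannot collapse to a single limit but must take $\limsup_{N}$ first and $k \to \infty$ second; it is essential that hypothesis (c) is phrased exactly as $\lim_k \limsup_N$, matching this order. The reason for restricting the test functions to the Lipschitz class (rather than arbitrary bounded continuous functions) is precisely that the Lipschitz bound is what converts the metric proximity controlled by (c) into control of $|g(Y_N) - g(X_{N,k})|$ through the truncation estimate above.
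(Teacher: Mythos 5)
Your proof is correct: the truncation bound $\Ex|g(Y_N)-g(X_{N,k})| \le L\varepsilon + 2M\,\Prob(|Y_N-X_{N,k}|\ge\varepsilon)$ is valid, the order of limits ($\limsup_N$ first at fixed $k$, then $k\to\infty$, then $\varepsilon\downarrow 0$) matches the form of hypothesis (c) exactly, and each term of your triangle inequality is killed by precisely one hypothesis. The paper itself gives no proof of this lemma --- it is quoted directly as Theorem 25.5 of Billingsley --- so the natural comparison is with Billingsley's argument, and yours is genuinely different. Billingsley works with distribution functions: for $x$ a continuity point of $F_X$ one sandwiches
\[
	\Prob(X_{N,k}\le x-\varepsilon) - \Prob(|X_{N,k}-Y_N|\ge\varepsilon) \le \Prob(Y_N\le x) \le \Prob(X_{N,k}\le x+\varepsilon) + \Prob(|X_{N,k}-Y_N|\ge\varepsilon),
\]
takes $\limsup_N$ and $\liminf_N$ using (a) at continuity points $x\pm\varepsilon$ of $F_{X_k}$, then lets $k\to\infty$ and $\varepsilon\downarrow 0$ through suitable continuity points using (b) and (c). That route is elementary and self-contained but requires bookkeeping of continuity points (one must choose $\varepsilon$ avoiding the at-most-countable discontinuity sets of $F_{X_k}$ and $F_X$). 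Your route trades that bookkeeping for one standard external fact --- that bounded Lipschitz functions form a convergence-determining class on $\R$, which you invoke without proof and which a referee might ask you to cite (it follows from the portmanteau theorem, approximating indicators of closed sets by Lipschitz functions). In exchange you get a cleaner estimate with no exceptional points and an argument that extends verbatim to random elements of any separable metric space, whereas the cdf proof is confined to $\R$.
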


\begin{proof}[Proof of Theorem~{\rm\ref{thm:moment-distribution}}]
(i) $\Rightarrow$ (ii).
From the assumption, each moment of $\bar \xi_N$ converges to the corresponding moment of $\bar \xi$, and the measure $\bar \xi$ is determined by moments. This implies that $\bar \xi_N$ converges weakly to $\bar \xi$ as $N \to \infty$. In other words, for any bounded continuous function $f$,
\begin{equation}\label{poly-growth}
	\bra{\bar \xi_N, f} \to \bra{\bar \xi, f} \quad \text{as} \quad N \to \infty.
\end{equation}
Note that the above also holds for continuous functions of polynomial growth.

	Let $f$ be a bounded continuous function. If suffices to show that 
\begin{equation}\label{f-bc}
	\bra{\xi_N, f} \dto \bra{\xi, f} \quad \text{as} \quad N \to \infty.
\end{equation}
We are going to use the approximation argument in Lemma~\ref{lem:triangle}. 
	Take a sequence of polynomials $\{p_k\}_{k=1}^\infty$ approximating $f$ in the sense that 
\[
	\bra{\bar \xi, (f - p_k)^2} \to 0 \quad \text{as} \quad k \to \infty.
\]
Such sequence exists because when the measure $\bar \xi$ is determined by moments, polynomials are dense in the $L^2$-space $L^2(\bar \xi)$. Let $Y_N = \bra{\xi_N, f}, X_{N,k} = \bra{\xi_N, p_k}, X_k = \bra{\xi, p_k}$ and $X = \bra{\xi, f}$. The desired convergence~\eqref{f-bc} follows once we finish checking the three conditions in Lemma~\ref{lem:triangle}.

First, the condition~(a) is exactly (i),
\begin{equation}\label{N-to-infinity}
	\bra{\xi_N, p_k} \dto \bra{\xi, p_k} \quad \text{as}\quad N \to \infty.
\end{equation}
Next, we observe that
\[
	\Ex[(\bra{\xi, p_k} - \bra{\xi, f})^2] = \Ex[\bra{\xi, (p_k - f)}^2] \le \Ex[\bra{\xi, (p_k - f)^2}] = \bra{\bar \xi, (p_k - f)^2}, 
\]
which clearly goes to zero as $k \to \infty$. The condition (b) follows because the $L^2$ convergence implies the convergence in distribution. Now, by using the same estimate, we deduce that
\[
	\limsup_{N \to \infty}\Ex[(\bra{\xi_N, p_k} - \bra{\xi_N, f})^2] \le \limsup_{N \to \infty}  \bra{\bar\xi_N, (p_k - f)^2} = \bra{\bar \xi, (p_k - f)^2}.
\]
Here we have used equation~\eqref{poly-growth} in the last equality, because the function $(p_k - f)^2$ is continuous of polynomial growth. That estimate, together with Markov's inequality, implies that for any $\varepsilon > 0$,
\begin{equation}\label{epsilon-approximation}
	\lim_{k \to \infty}\limsup_{N \to \infty}\Prob \left(|\bra{\xi_N, p_k} - \bra{\xi_N, f}| \ge \varepsilon \right) = 0.
\end{equation}
The condition (c) has been shown, which completes the proof of [(i) $\Rightarrow$ (ii)].

(ii) $\Rightarrow$ (i). Let $p$ be a polynomial. For $k > 0$, consider the truncation of $p$,
	\[
		t_k(x) = \begin{cases}
			p(x), &\text{if }|p(x)| \le k,\\
			k,   &\text{if }p(x) > k,\\
			-k,  &\text{if }p(x) <- k.
			\end{cases}
	\]
We use Lemma~\ref{lem:triangle} again to show that 
\[
	Y_N := \bra{\xi_N, p} \dto X:= \bra{\xi, p}\quad \text{as} \quad N \to \infty,
\]
via the sequence $X_{N, k} := \bra{\xi_N, t_k}$ and the sequence $X_k := \bra{\xi, t_k}$. For that, we need to check three conditions (a), (b) and (c) in the lemma.

The condition (a), 
\[
	\bra{\xi_N, t_k} \dto \bra{\xi, t_k},
\]
 is clear since  $t_k$ is a bounded continuous function.
For (b), note that 
\[
	\bra{\xi, p} < \infty, \quad \text{almost surely},
\]
by the assumption. Then by the Lebesgue dominated convergence theorem, 
\[
	\bra{\xi, t_k} \to \bra{\xi, p} \quad \text{as}\quad k \to \infty, \quad \text{almost surely,}
\]
implying (b). It remains to show (c). For $\varepsilon > 0$, we first estimate that
\begin{align*}
	\Prob(|X_{N, k} - Y_N| \ge \varepsilon) &= \Prob(|\bra{\xi_N, p - t_k}| \ge \varepsilon) \\
	&\le \frac{1}{\varepsilon} \Ex[\bra{\xi_N, |p-t_k|}] = \frac1{\varepsilon} \bra{\bar \xi_N, |p-t_k|}.
\end{align*}
Clearly, the function $ |p-t_k|$ is continuous of polynomial growth. Thus,
take the limit as $N \to \infty$ in the above equation and use~\eqref{poly-growth}, we get that 
\[
	\limsup_{N\to \infty}\Prob(|X_{N, k} - Y_N| \ge \varepsilon) \le \frac1{\varepsilon} \bra{\bar \xi, |p-t_k|},
\]
which tends to zero as $k \to \infty$. The condition (c) is satisfied, which completes the proof of Theorem~\ref{thm:moment-distribution}.
\end{proof}

\subsection*{Acknowledgments.}
This research is supported by the VNU University of Science, grant number TN.25.17 (H.D.T) and by
JSPS KAKENHI Grant number JP24K06766 (K.D.T.).

%

\end{document}